\let\csname equation*\endcsname\relax
\let\csname endequation*\endcsname\relax
\newtheorem{theorem}{Theorem}
\newtheorem{definition}{Definition}
\newtheorem{lemma}{Lemma}
\newcommand{\mysinglebox}{
\begin{tikzpicture}[baseline=-0.3em]
        \matrix (m) [matrix of math nodes,
                 nodes={draw, 
                    minimum width=1em, 
                    minimum height=1em,
                    scale=0.8,
                    font=\scriptsize, 
                    anchor=center},
                 column sep=-\pgflinewidth,
                 row sep=-\pgflinewidth,
                 ampersand replacement=\&,
                 ]
        {
        ~ \\
        };
    \end{tikzpicture}
}
\newcommand{\myfullyantisymmetrizedbox}[1]{
\begin{tikzpicture}[
        BC/.style = {decorate,             
                decoration={calligraphic brace, amplitude=5pt, raise=1mm},
                very thick, 
                pen colour={black},
                font=\footnotesize
                    },
            baseline=-0.25em,
                    ]
        \matrix (m) [matrix of math nodes,
                 nodes={draw, 
                    minimum width=1em, 
                    minimum height=1em,
                    scale=0.8,
                    font=\scriptsize, 
                    anchor=center},
                 column sep=-\pgflinewidth,
                 row sep=-\pgflinewidth,
                 ampersand replacement=\&,
                 ]
        {
        ~ \\
        ~ \\ 
        |[draw=none, text height=3mm]| \vdots\\
        ~ \\
        };
        \draw[BC] (m-4-1.south west) -- node[left =0.5em] {#1} (m-1-1.north west);
    \end{tikzpicture}
}
\newcommand{\myantisymmetrizedandsimmetrizedbox}[1]{
\begin{tikzpicture}[
        BC/.style = {decorate,             
                decoration={calligraphic brace, amplitude=5pt, raise=1mm},
                very thick, 
                pen colour={black},
                font=\footnotesize
                    },
            baseline=-0.25em,
                    ]
        \matrix (m) [matrix of math nodes,
                 nodes={draw, 
                    minimum width=1em, 
                    minimum height=1em,
                    scale=0.8,
                    font=\scriptsize, 
                    anchor=center},
                 column sep=-\pgflinewidth,
                 row sep=-\pgflinewidth,
                 ampersand replacement=\&,
                 ]
        {
        ~ \& ~\\
        ~ \& \\ 
        |[draw=none, text height=3mm]| \vdots \& \\
        ~  \& \\
        };
        \draw[BC] (m-4-1.south west) -- node[left =0.5em] {#1} (m-1-1.north west);
    \end{tikzpicture}
}
\begin{document}
\title[Flat-band ferromagnetism in the SU(\texorpdfstring{$N$}{N}) Hubbard and Kondo lattice models]{Flat-band ferromagnetism in the SU(\texorpdfstring{$N$}{N}) Hubbard and Kondo lattice models}
\author{Kensuke Tamura}
\address{Department of Physics, The University of Tokyo, 7-3-1 Hongo, Bunkyo-ku, Tokyo 113-0033, Japan}
\ead{tamura-kensuke265@g.ecc.u-tokyo.ac.jp}

\author{Hosho Katsura}
\address{Department of Physics, The University of Tokyo, 7-3-1 Hongo, Bunkyo-ku, Tokyo 113-0033, Japan}
\address{Institute for Physics of Intelligence, The University of Tokyo,
7-3-1 Hongo, Bunkyo-ku, Tokyo 113-0033, Japan}
\address{Trans-scale Quantum Science Institute, The University of Tokyo,
7-3-1 Hongo, Bunkyo-ku, Tokyo 113-0033, Japan}
\ead{katsura@phys.s.u-tokyo.ac.jp}
\vspace{10pt}
\begin{indented}
\item[]\today
\end{indented}

\begin{abstract}
We develop a general theory of flat-band ferromagnetism in the SU($N$) Fermi-Hubbard model, which describes the behavior of $N$-component fermions with SU($N$) symmetric interactions.
We focus on the case where the single-particle spectrum has a flat band at the bottom and establish a necessary and sufficient condition for the SU($N$) Hubbard model to exhibit ferromagnetism when the number of particles is the same as the degeneracy. 
We show that the occurrence of ferromagnetism is equivalent to the irreducibility of the projection matrix onto the space of single-particle ground states.
We also demonstrate that this result can be exploited to establish a rigorous result for the ferromagnetic SU($N$) Kondo lattice model with a flat band.
Specifically, we prove that when the SU($N$) Hubbard model is ferromagnetic, the ferromagnetic SU($N$) Kondo lattice model with the same hopping matrix also exhibits SU($N$) ferromagnetism.
\end{abstract}

\vspace{2pc}
\noindent{\it Keywords}: SU($N$) Hubbard model, SU($N$) Kondo lattice model, flat-band ferromagnetism.
%
\submitto{\jpa}
%
\maketitle
%
%

\section{Introduction \label{sec:introduction}}
In recent years, advances in experimental techniques with ultracold atoms have allowed for the simulation of various quantum systems in optical lattices~\cite{bloch2008many,bloch2012quantum,lewenstein2012ultracold,gross2017quantum,schafer2020tools}.
With the ability to precisely control lattice potentials and interaction strengths, ultracold atomic systems are expected to be a versatile tool for investigating many-body physics in strongly correlated systems.
It is worth noting that ultracold atoms are not restricted to simulating known models describing conventional quantum systems but can also realize novel quantum systems with no counterpart in conventional materials.
\par 
One example of such a novel quantum system is fermionic systems with SU($N$) symmetry realized with alkaline-earth-like atoms.
These atoms trapped in an optical lattice are described by the SU($N$) Fermi-Hubbard model~\cite{gorshkov2010two}, which generalizes the standard Hubbard model with SU(2) symmetry~\cite{kanamori1963electron,gutzwiller1963effect,hubbard1963electron,montorsi1992hubbard,fazekas1999lecture}.
In conventional condensed-matter physics, the SU($N$) Hubbard model has mainly been explored with the large-$N$ approach~\cite{affleck1988large,marston1989large}.
This approach is primarily concerned with the behavior of the model with infinitely large $N$, and little attention has been paid to the properties of the model for finite $N$ ($N > 2$).
However, recent experimental realizations of the SU($N$) Hubbard model with ultracold atoms have inspired theoretical studies on the properties of the SU($N$) Hubbard model with finite $N$~\cite{taie20126,hofrichter2016direct,ozawa2018antiferromagnetic}.
Recent studies have shown that the SU($N$) Hubbard model can exhibit phases, and interest in this model has continued to grow.
\par
Moreover, in a specific limit, the system of alkaline-earth-like atoms can be described by the SU($N$) Kondo lattice model~\cite{gorshkov2010two,totsuka2023ferromagnetism}, in which itinerant fermions and localized SU($N$) spins interact with SU($N$) symmetric exchange interaction.
Efforts have also been made to realize such systems described by the SU($N$) Kondo lattice model using two-orbital alkaline-earth-like atoms~\cite{cappellini2014direct,ono2019antiferromagnetic,cappellini2019coherent}.
The SU($N$) Kondo lattice model was also introduced in the large-$N$ approach to study the SU(2) Kondo lattice model~\cite{coleman19831,read1984stability,tsunetsugu1997ground}. 
However, in this approach, the main focus was on the case with infinitely large $N$, and the properties of the models at finite $N$ have been less studied except for the case of $N=2$.
\par 
While the SU($N$) Hubbard and Kondo lattice models have attracted much interest both theoretically and experimentally, such models are notoriously difficult to solve analytically. 
Nevertheless, obtaining mathematically rigorous results in special situations would be possible. 
Although the model in such a situation may be unrealistic, it can serve as a basis for other theoretical studies. 
Here we review rigorous results for SU($N$) symmetric models, mostly for the SU($N$) Hubbard model (including the case with $N=2$).
\par
The Nagaoka ferromagnetism is the first rigorous result for the SU(2) Hubbard model~\cite{nagaoka1966ferromagnetism,tasaki1989extension}.
It was proved that when the Coulomb repulsions are infinitely large, and there is exactly one hole, the ground state of the Hubbard model is ferromagnetic and unique, provided the lattice satisfies a certain connectivity condition.
Recently, Refs.~\cite{katsura2013nagaoka,bobrow2018exact} have reported that the Nagaoka ferromagnetism can be extended to the SU($N$) Hubbard model with general $N$.
In the multiorbital Hubbard model, theorems regarding ferromagnetism have been rigorously proved in the Refs.~\cite{li2014exact,li2015exact}. 
These theorems have also been extended to the SU($N$) case, as discussed in~\cite{li2014exact}.
\par 
In Ref.~\cite{lieb1989two}, Lieb established rigorous results for both attractive and repulsive SU(2) Hubbard model. 
In particular, for the repulsive case, it was proved that Lieb's ferrimagnetism is exhibited in a wide range of models, which can also be considered as the case with a flat band. 
Subsequently, Mielke~\cite{mielke1991ferromagnetic} and Tasaki~\cite{tasaki1992ferromagnetism} independently established new rigorous results for the SU(2) Hubbard model, known as flat-band ferromagnetism. 
The term flat band refers to the structure of the single-particle energy spectrum with macroscopic degeneracy.
They constructed tight-binding models that produce a flat band at the bottom of the single-particle spectrum and then showed that the ground states of the Hubbard model are ferromagnetic and unique when the number of particles equals the multiplicity of the single-particle ground states.
There are systematic methods for constructing tight-binding models with flat bands.
For example, Mielke proposed a method based on line graphs~\cite{mielke1991ferromagnetic}.
Another method called cell construction was introduced by Tasaki~\cite{tasaki1992ferromagnetism,mielketasaki1993ferromagnetism,tasaki1998nagaoka,tasaki2020physics}.
Other methods of constructing various classes of flat bands have also been proposed~\cite{nishino2003flat,nishino2005three,hatsugai2011zq,hatsugai2015flat,katsura2015flatband,ogata2021methods}. 
Based on these methods, various types of flat-band ferromagnetism have been studied so far~\cite{sekizawa2003extension,ueda2004ferromagnetism,gulacsi2007exact,katsura2010ferromagnetism,tanaka2020extension}.
Furthermore, extensions of flat-band ferromagnetism to the SU($N$) case have recently been discussed, and rigorous results were proved in Refs.~\cite{liu2019flat,tamura2019ferromagnetism,tamura2021ferromagnetism}.
\par
Although there are various tight-binding models that have a flat band at the bottom of their energy spectrum, we should note that it is not always guaranteed that the ground state of the Hubbard model, which is formed by adding the on-site interaction term to the tight-binding model, is uniquely ferromagnetic.
In the case of the SU(2) Hubbard model, a general theory of flat-band ferromagnetism has been developed~\cite{mielke1993ferromagnetism,mielke1999stability,tasaki2020physics}.
This theory provides a necessary and sufficient condition to determine whether an SU(2) Hubbard model with a bottom flat band exhibits ferromagnetism in the ground state.
However, the corresponding general theory of flat-band ferromagnetism in the SU($N$) Hubbard model has not yet been established. 
\par 
We also comment on rigorous results for the Kondo lattice model.
As for the SU(2) Kondo lattice model, a few rigorous results are known.
In Ref.~\cite{lacroix1985some}, the equivalence between the antiferromagnetic SU(2) Kondo lattice model in the strong-coupling limit and the SU(2) Hubbard model with infinitely large Coulomb repulsion was found. 
It was shown that there exist some ferromagnetic regions.
In Ref.~\cite{sigrist1991rigorous}, the antiferromagnetic SU(2) Kondo lattice model with one electron was investigated, and it was rigorously proved that the ground state exhibits a ferromagnetic order. 
The SU(2) Kondo lattice model with a flat band was also discussed in Ref.~\cite{shen1998ferromagnetic}.
For the SU($N$) Kondo lattice model, the one-dimensional SU($N$) Kondo lattice model has recently been discussed in Ref.~\cite{totsuka2023ferromagnetism}. 
In the strong-coupling limit, the effective Hamiltonian of the model was derived, and rigorous results for the ground states were proved, which can be seen as a generalization of the result in Ref.~\cite{kubo1982note} to the SU($N$) case.
\par 
This paper presents a general theory of flat-band ferromagnetism in the SU($N$) Hubbard model.
This is a natural extension of the general theory in the SU(2) Hubbard model presented in Refs.~\cite{mielke1993ferromagnetism,mielke1999stability}.
We consider a hopping matrix whose ground states are degenerate.
Then we study the SU($N$) Hubbard model with the hopping matrix.
We give a necessary and sufficient condition for the model to exhibit SU($N$) ferromagnetism.
It is proved that the emergence of SU($N$) ferromagnetism is equivalent to the irreducibility of the orthogonal projection matrix onto the space spanned by the lowest energy states of the single-particle spectrum.
\par
In addition, we find an application of the result to the ferromagnetic SU($N$) Kondo lattice model and prove a rigorous result for flat-band ferromagnetism in this model.
The standard alkali-earth-like atoms, such as $^{87}\mathrm{Sr}$ and $^{173} \mathrm{Yb}$, exhibit the ferromagnetic Kondo coupling rather than antiferromagnetic Kondo coupling~\cite{zhang2014spectroscopic,cappellini2014direct,scazza2014observation,totsuka2023ferromagnetism}.
Consequently, in the context of ultracold atomic experiments, it is more physically natural to consider the ferromagnetic SU($N$) Kondo lattice model.
Supposing that the SU($N$) Hubbard model exhibits SU($N$) ferromagnetism, it is rigorously proved that the ferromagnetic SU($N$) Kondo lattice model with the same hopping matrix also exhibits SU($N$) ferromagnetism in its ground states.
\par 
The present paper is organized as follows.
In Sec.~\ref{sec:SU(N) hubbard model and main result}, we consider the SU($N$) Hubbard model with degenerate single-particle ground states. 
We then discuss the necessary and sufficient condition for the SU($N$) Hubbard model to exhibit ferromagnetism when the number of particles is the same as the degeneracy and prove that the irreducibility of the projection matrix onto the space of single-particle ground states is equivalent to the occurrence of ferromagnetism.
In Sec.~\ref{sec:Kondo}, we further discuss the ferromagnetic SU($N$) Kondo lattice model with a flat band. 
By exploiting the general theory for the Hubbard model, we also establish a rigorous result for flat-band ferromagnetism in the ferromagnetic SU($N$) Kondo lattice model.
Finally, in Sec.~\ref{sec:conclusion}, we give a summary and present some remarks on the theorem concerning the SU($N$) Kondo lattice model.
\section{The SU(\texorpdfstring{$N$}{N}) Hubbard model and main result \label{sec:SU(N) hubbard model and main result}}
\subsection{The SU(\texorpdfstring{$N$}{N}) Hubbard model \label{subsec:hamiltonian}}
Let $\Lambda$ be a finite lattice.
We denote by $\hat{c}_{x, \alpha}^\dag$ and $\hat{c}_{x, \alpha}$, respectively, the fermionic creation and the annihilation operators at site $x \in \Lambda$ with color $\alpha = 1, \dots, N$. 
They satisfy the anticommutation relations 
\begin{align}
    \{\hat{c}_{x, \alpha}, \hat{c}_{y, \beta}\} &= \{\hat{c}_{x, \alpha}^\dag, \hat{c}_{y, \beta}^\dag\} = 0 , \\ 
    \{\hat{c}_{x, \alpha}, \hat{c}_{y, \beta}^\dag\} &= \delta_{\alpha, \beta} \delta_{x, y}.
\end{align}
The number operator of fermion at site $x$ with color $\alpha$ is defined by $\hat{n}_{x, \alpha} = \hat{c}_{x, \alpha}^\dag \hat{c}_{x, \alpha}$, and the total fermion number is $\hat{N}_c = \sum_{x \in \Lambda} \hat{n}_x$,
where $\hat{n}_x = \sum_{\alpha=1}^N \hat{n}_{x, \alpha}$.
The Fock space of the fermionic operators is denoted by $\mathcal{H}(\Lambda)$.
The Hamiltonian of the SU($N$) Hubbard model is given by
\begin{align}
    \hat{H}_{\mathrm{Hub}} &= \hat{H}_{\mathrm{hop}} + \hat{H}_{\mathrm{int}}, \label{eq:SUn hubbard hamiltonian}\\
    \hat{H}_{\mathrm{hop}} &= \sum_{\alpha=1}^N \sum_{x, y \in \Lambda} t_{x, y} \hat{c}_{x, \alpha}^\dag \hat{c}_{y, \alpha}, \label{eq:hopping hamiltonian}\\ 
    \hat{H}_{\mathrm{int}} &= U \sum_{\alpha < \beta} \sum_{x \in \Lambda} \hat{n}_{x, \alpha} \hat{n}_{x, \beta}, \label{eq:interaction hamiltonian}
\end{align}
where $\mathsf{T} = \left(t_{x, y}\right)_{x, y \in \Lambda}$ is the hopping matrix on the lattice $\Lambda$, and the parameter $U$ is assumed to be positive.
\par
In the SU($N$) Hubbard model, the total number of fermions is trivially conserved, which can be seen as 
\begin{align}
    [\hat{H}_{\mathrm{Hub}}, \hat{N}_c] = 0.
\end{align}
We define color raising and lowering operators by
\begin{align}
    \hat{F}^{\alpha, \beta} = \sum_{x \in \Lambda} \hat{c}_{x, \alpha}^\dag \hat{c}_{x, \beta} \ \ \text{for} \ \alpha \neq \beta, 
\end{align}
and the total number operator of fermions with color $\alpha$ by
\begin{align}
    \hat{F}^{\alpha, \alpha} = \sum_{x \in \Lambda} \hat{c}_{x, \alpha}^\dag \hat{c}_{x, \alpha} \ \ \text{for} \ \alpha = 1, \dots, N.
\end{align}
Due to the SU($N$) symmetry, one can see that the operators $\hat{F}^{\alpha, \beta}$ commutes with $\hat{H}_{\mathrm{Hub}}$.
Together with the conservation of the total number of fermions, the Hamiltonian~\eqref{eq:SUn hubbard hamiltonian} possesses $\mathrm{U}(N) = \mathrm{U}(1) \times \mathrm{SU}(N)$ symmetry.
In what follows, we denote the eigenvalues of $\hat{F}^{\alpha, \alpha}$ by $M_{\alpha}$, and the eigenvalue of $\hat{N}_c$ by $N_c$.
\par
Let us introduce some subspaces of $\mathcal{H}(\Lambda)$.
We define a subspace $\mathcal{H}_{N_c}(\Lambda)$ by 
\begin{align}
    \mathcal{H}_{N_c}(\Lambda) = \{\ket{\Phi} \in \mathcal{H}(\Lambda)~|~ \hat{N}_c \ket{\Phi} = N_c \ket{\Phi}\}.
\end{align}
We also define a subspace $\mathcal{H}_{M_1, \dots, M_N}(\Lambda)$ by 
\begin{align}
    \mathcal{H}_{M_1, \dots, M_N}(\Lambda) = \{\ket{\Phi} \in \mathcal{H}(\Lambda)~|~ \hat{F}^{\alpha, \alpha} \ket{\Phi} = M_{\alpha} \ket{\Phi} \ \text{for all}\ \alpha = 1, \dots, N\}. \label{eq:Hilbert space with fixed color}
\end{align}
\par
To define SU($N$) ferromagnetism, we introduce the quadratic Casimir operator $\hat{C}_2$ of the SU($N$) group, which is defined by~\cite{ping2002group}
\begin{align}
    \hat{C}_2 = \frac{1}{2} \left(\sum_{\alpha, \beta = 1}^{N} \hat{F}^{\alpha, \beta} \hat{F}^{\beta, \alpha} - \frac{\hat{N}_c^2}{N}\right).
\end{align}
When $N=2$, in the standard notation, one may write $\hat{S}^{+}_{\mathrm{tot}} = \hat{F}^{1, 2}$, $\hat{S}^{-}_{\mathrm{tot}} = \hat{F}^{2, 1}$, and $\hat{S}_{\mathrm{tot}}^z = \left(\hat{F}^{1, 1} - \hat{F}^{2, 2}\right)/2$. 
With the notation, the operator $\hat{C}_2$ is written as 
\begin{align}
    \hat{C}_2 = \frac{1}{2} \left(\hat{S}^{+}_{\mathrm{tot}} \hat{S}^{-}_{\mathrm{tot}} + \hat{S}^{-}_{\mathrm{tot}} \hat{S}^{+}_{\mathrm{tot}}\right) + \left(\hat{S}_{\mathrm{tot}}^z\right)^2 \ \ \text{for} \ N=2.
\end{align}
This operator is the square of the magnitude of the total spin operator defined by $\left(\hat{\bm{S}}_{\mathrm{tot}}\right)^2 = \left(\hat{S}^x_{\mathrm{tot}}\right)^2 + \left(\hat{S}^y_{\mathrm{tot}}\right)^2 + \left(\hat{S}^z_{\mathrm{tot}}\right)^2$. 
Therefore, the operator $\hat{C}_2$ can be seen as a generalization of the operator $\left(\hat{\bm{S}}_{\mathrm{tot}}\right)^2$.
\par
Now we are ready to state the definition of SU($N$) ferromagnetism.
\begin{definition} 
Consider the Hamiltonian~\eqref{eq:SUn hubbard hamiltonian} with the total fermion number $N_c$.
We say that the model exhibits SU($N$) ferromagnetism if any ground state $\ket{\Phi_{\mathrm{GS}}}$ has the maximum eigenvalue of $\hat{C}_2$ in $\mathcal{H}_{N_c}(\Lambda)$, i.e.,
\begin{align}
    \hat{C}_2 \ket{\Phi_{\mathrm{GS}}} = \frac{N_c (N-1)}{2} \left(\frac{N_c}{N} + 1\right) \ket{\Phi_{\mathrm{GS}}}. \label{eq:def of ferromagnetic state}
\end{align}
\end{definition}
Note that the above definition of ferromagnetism is the strongest form of ferromagnetism, which should be referred to as complete ferromagnetism or saturated ferromagnetism. 
In the case $N=2$, let $S_{\mathrm{tot}} \left(S_{\mathrm{tot}} + 1\right)$ be the eigenvalue of $\left(\hat{\bm{S}}_{\mathrm{tot}}\right)^2$. 
Even if $S_{\mathrm{tot}}$ of the ground states is macroscopically large but not the maximum value, it is commonly considered that ferromagnetism is manifested. 
In this paper, however, we only study complete ferromagnetism and refer to it simply as ferromagnetism.
\par 
We call a state satisfying Eq.~\eqref{eq:def of ferromagnetic state} as a fully polarized state. 
The eigenvalue equation \eqref{eq:def of ferromagnetic state} is satisfied if a state has no double occupancy and is fully symmetrized with respect to the color degrees of freedom. 
Conversely, a state satisfying Eq.~\eqref{eq:def of ferromagnetic state} is such a fully symmetrized state. 
\subsection{Main theorem \label{subsec:main theorem}}
Here we state our main theorem.
First, we introduce some notation and make assumptions about the model.
The single-particle Hilbert space is denoted by $\mathfrak{h} \cong \mathbb{C}^{|\Lambda|}$, and we write a $|\Lambda|$-dimensional vector in $\mathfrak{h}$ as $\bm{\phi} = (\phi(x))_{x \in \Lambda}$.
The inner product of two vectors, $\bm{\phi}$ and $\bm{\psi}$, is defined by 
\begin{align}
    \langle \bm{\phi}, \bm{\psi} \rangle = \sum_{x \in \Lambda} \phi(x)^{*} \psi(x).
\end{align}
We now assume that 
\begin{align}
    \mathsf{T} \ge 0,
\end{align}
and denote the kernel of $\mathsf{T}$ by $\mathfrak{h}_0 = \mathrm{ker} \mathsf{T}$.
We also assume that $\mathfrak{h}_0$ is not empty and write $D_0 = \mathrm{dim} \mathfrak{h}_0$.
Let $\mathsf{P}_0$ be the orthogonal projection matrix onto the subspace $\mathfrak{h}_0$, and we define $\Lambda_0 = \{x \in \Lambda| (\mathsf{P}_0)_{x, x} \neq 0\}$.
We say the $|\Lambda_0| \times |\Lambda_0|$ matrix $\left(\left(\mathsf{P}_0\right)_{x, y}\right)_{x, y \in \Lambda_0}$ is reducible if and only if $\Lambda_0$ can be decomposed as $\Lambda_0 = \Lambda_1 \cup \Lambda_2$ with $\Lambda_1 \cap \Lambda_2 = \emptyset$, $\Lambda_0 \neq \emptyset$, and $\Lambda_2 \neq \emptyset$ so that $\left(\mathsf{P}_0\right)_{x, y} = 0$ for any $x \in \Lambda_1$ and $y \in \Lambda_2$.
The matrix $\left(\left(\mathsf{P}_0\right)_{x, y}\right)_{x, y \in \Lambda_0}$ is said to be irreducible if it is not reducible.
If $\mathsf{T}$ has translation symmetry, we have energy bands as a function of wave vectors.
Moreover, if $D_0$ is proportional to the number of sites $|\Lambda|$, it suggests that the lowest band is flat at zero energy.
Figure~\ref{fig:delta chain} shows an example of a lattice system in which the lowest band is flat, called a delta chain. 
The realization of this lattice system with an optical lattice is also discussed~\cite{zhang2015one}.
The dispersion relations of this lattice system are shown in Fig.~\ref{fig:energy band}.
\begin{figure}[t]
    \centering
    \includegraphics[width = 0.4\columnwidth]{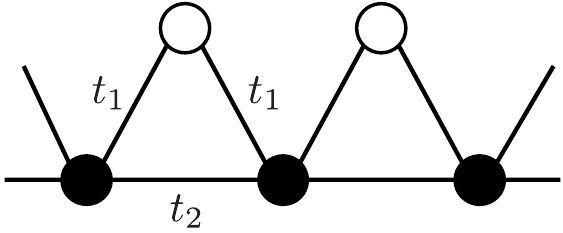}
    \caption{
    The lattice structure of the delta chain with hopping amplitude $t_1 = t/\sqrt{2}$ and $t_2 = t/2$, where we impose the periodic boundary conditions.
    All the sites have the uniform on-site potentials $t$.
    } \label{fig:delta chain}
\end{figure}

\begin{figure}
    \centering
    \includegraphics[width = 0.5\columnwidth]{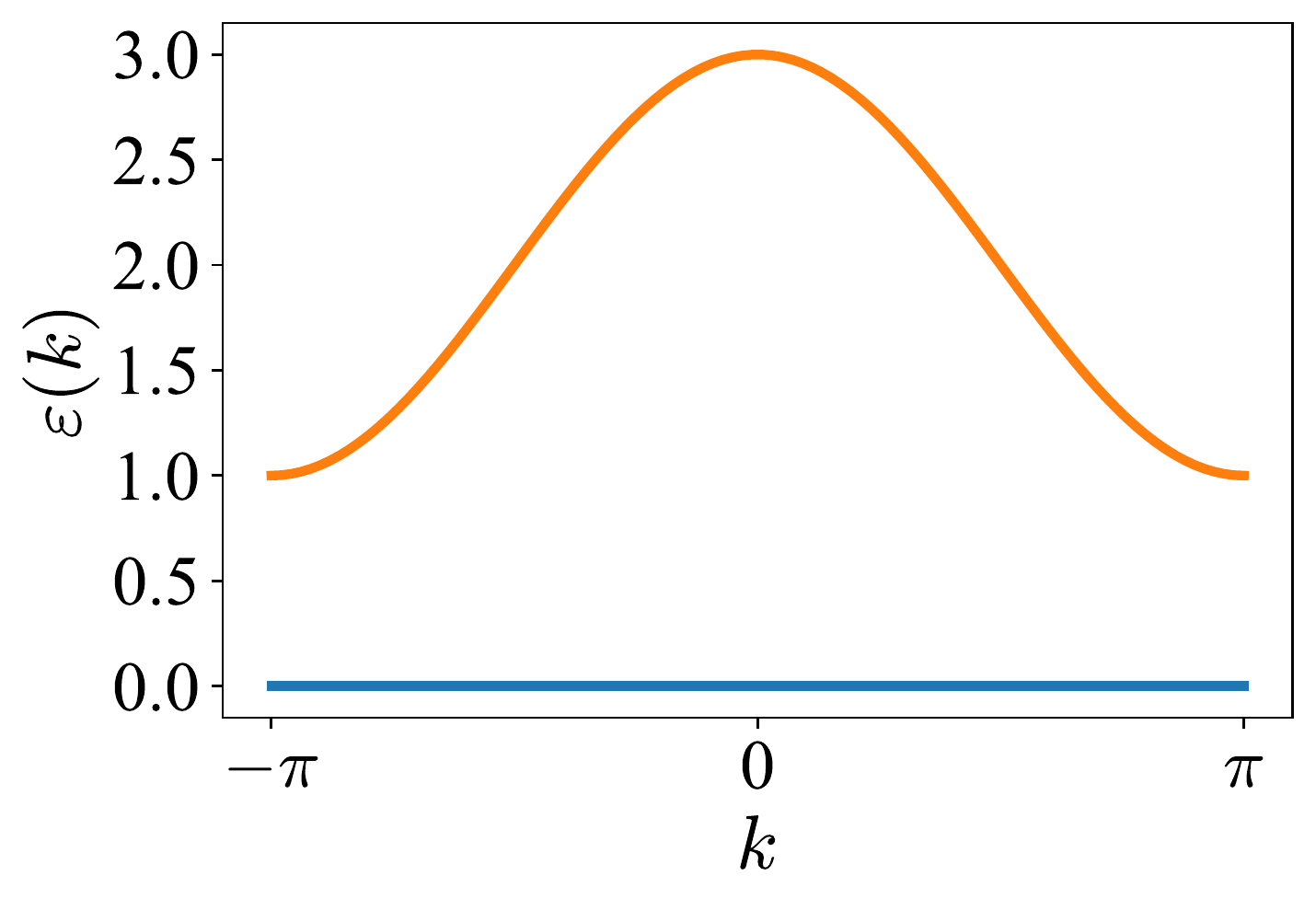}
    \caption{
    The dispersion relations of the energy bands for the delta chain with $t=1$.
    The lowest band is completely flat at zero energy
    In this case, the value of $D_0$ equals the number of the unit cells.
    }\label{fig:energy band}
\end{figure}

Now we are ready to state our theorem.
\begin{theorem} \label{thm:SUn flat band ferro}
Consider the SU($N$) Hubbard model~\eqref{eq:SUn hubbard hamiltonian} with $\mathsf{T} \geq 0$ and $N_c = D_0$ .
The model exhibits SU($N$) ferromagnetism if and only if the $|\Lambda_0| \times |\Lambda_0|$ matrix $\left(\left(\mathsf{P}_0\right)_{x, y}\right)_{x, y \in \Lambda_0}$ is irreducible.
\end{theorem}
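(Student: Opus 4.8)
The plan is to reduce the statement to a structural description of the ground states and then dispatch the two implications. First I would record the elementary facts. Since $\mathsf{T}\ge 0$ and $U>0$, both $\hat{H}_{\mathrm{hop}}$ and $\hat{H}_{\mathrm{int}}$ are nonnegative, hence $\hat{H}_{\mathrm{Hub}}\ge 0$; and for an orthonormal basis $\bm{\phi}_{1},\dots,\bm{\phi}_{D_{0}}$ of $\mathfrak{h}_{0}$ with $\hat{a}_{i,\alpha}^{\dag}:=\sum_{x\in\Lambda}\phi_{i}(x)\hat{c}_{x,\alpha}^{\dag}$, the state $\ket{\Phi_{\mathrm{FM}}}=\hat{a}_{1,1}^{\dag}\cdots\hat{a}_{D_{0},1}^{\dag}\ket{0}$ and all of its $\mathrm{SU}(N)$ images are annihilated by $\hat{H}_{\mathrm{hop}}$ (each occupied orbital lies in $\ker\mathsf{T}$) and by $\hat{H}_{\mathrm{int}}$ (they have no double occupancy), so the ground-state energy is $0$ and the fully polarized multiplet sits inside the ground space. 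Because $\langle\Phi|\hat{H}_{\mathrm{Hub}}|\Phi\rangle=0$ forces $\hat{H}_{\mathrm{hop}}\ket{\Phi}=\hat{H}_{\mathrm{int}}\ket{\Phi}=0$ separately, a state $\ket{\Phi}\in\mathcal{H}_{D_{0}}(\Lambda)$ is a ground state if and only if (i) all of its occupied single-particle orbitals lie in $\mathfrak{h}_{0}$, and (ii) $\hat{n}_{x,\alpha}\hat{n}_{x,\beta}\ket{\Phi}=0$ for every $x$ and every $\alpha\neq\beta$. As the ground space is $\mathrm{U}(N)$-invariant it is a direct sum of $\mathrm{SU}(N)$ irreducibles, and in view of the remark following the definition of ferromagnetism the model is ferromagnetic exactly when the only irreducible occurring is the fully symmetric one $V_{(D_{0})}$.

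For ``ferromagnetism $\Rightarrow$ irreducibility'' I would argue by contraposition. If $((\mathsf{P}_{0})_{x,y})_{x,y\in\Lambda_{0}}$ is reducible, write $\Lambda_{0}=\Lambda_{1}\sqcup\Lambda_{2}$ with $(\mathsf{P}_{0})_{x,y}=0$ for $x\in\Lambda_{1},y\in\Lambda_{2}$. Then $\mathsf{P}_{0}$ is block diagonal, each diagonal block is again a projection, and since $(\mathsf{P}_{0})_{x,x}\neq 0$ on $\Lambda_{0}$ neither block vanishes, so $\mathfrak{h}_{0}=\mathfrak{h}_{0}^{(1)}\oplus\mathfrak{h}_{0}^{(2)}$ with $\mathfrak{h}_{0}^{(j)}$ supported on $\Lambda_{j}$, $D_{j}:=\dim\mathfrak{h}_{0}^{(j)}\ge 1$, and $D_{1}+D_{2}=D_{0}$. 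Filling $\mathfrak{h}_{0}^{(1)}$ with color $1$ and $\mathfrak{h}_{0}^{(2)}$ with color $2$ yields a state $\ket{\Psi}$ satisfying (i), and satisfying (ii) because colors $1$ and $2$ live on the disjoint site sets $\Lambda_{1},\Lambda_{2}$; hence $\ket{\Psi}$ is a ground state. But in its real-space expansion every nonzero color configuration is identically $1$ on $\Lambda_{1}$ and identically $2$ on $\Lambda_{2}$, and since $D_{1},D_{2}\ge 1$ this configuration is non-constant; a state whose color part is of this form is not totally color-symmetric, so $\ket{\Psi}$ does not have the maximal eigenvalue of $\hat{C}_{2}$ and the model is not ferromagnetic.

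The substantive direction is ``irreducibility $\Rightarrow$ ferromagnetism''. Assume the model is not ferromagnetic; then the ground space contains the highest-weight vector $\ket{\Phi}$ of some $V_{\mu}$ with $\mu\neq(D_{0},0,\dots,0)$, so $\ket{\Phi}$ obeys (i), (ii), the highest-weight relations $\hat{F}^{\alpha,\beta}\ket{\Phi}=0$ for $\alpha<\beta$, and $M_{2}\ge 1$. The first tool is the identity $\hat{c}_{x,\alpha}\ket{\Phi}=\sum_{y}(\mathsf{P}_{0})_{x,y}\hat{c}_{y,\alpha}\ket{\Phi}$, which holds for any state obeying (i) because the complementary operator $\sum_{y}(\delta_{x,y}-(\mathsf{P}_{0})_{x,y})\hat{c}_{y,\alpha}$ annihilates any state all of whose occupied orbitals lie in $\mathfrak{h}_{0}$, and which remains valid after further flat-band orbitals are removed. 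The plan is then to combine this identity with the no-double-occupancy relations $\hat{c}_{x,\alpha}\hat{c}_{x,\beta}\ket{\Phi}=0$ and the highest-weight relations to show that the color data carried by $\ket{\Phi}$ at two sites $x,y$ must coincide whenever $(\mathsf{P}_{0})_{x,y}\neq 0$ — a local exchange symmetry of $\ket{\Phi}$ between $\mathsf{P}_{0}$-adjacent sites — after which irreducibility of $((\mathsf{P}_{0})_{x,y})_{x,y\in\Lambda_{0}}$ propagates the relation over all of $\Lambda_{0}$ and forces $\ket{\Phi}$ to be totally color-symmetric, contradicting $M_{2}\ge 1$. Hence $V_{(D_{0})}$ is the only irreducible in the ground space, which is therefore exactly the fully polarized multiplet, and the model is ferromagnetic.

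I expect this last step to be the main obstacle. The positivity argument, the ground-state characterization, and the explicit reducible construction are essentially bookkeeping; the difficulty is entirely in converting the local constraints (i), (ii) and the highest-weight condition into the precise ``$\mathsf{P}_{0}$-adjacent sites carry the same color data'' statement that lets irreducibility be invoked. This is the $\mathrm{SU}(N)$ counterpart of Mielke's analysis of the $\mathrm{SU}(2)$ flat-band Hubbard model~\cite{mielke1993ferromagnetism,mielke1999stability}, and — since an irreducible $V_{\mu}$ with three or more nonzero parts genuinely involves three or more colors at once and possesses no two-color weight vector — this step cannot be reduced to the two-color case and must be carried out directly for general $N$.
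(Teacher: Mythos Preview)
Your setup, the positivity argument, and the contrapositive construction for ``ferromagnetism $\Rightarrow$ irreducibility'' are correct and agree with the paper in substance. The hard direction, however, is handled quite differently in the paper, and the route you sketch has a genuine gap at exactly the point you flag.

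The paper does \emph{not} work directly with the matrix elements of $\mathsf{P}_{0}$. Its key tool is a structural lemma (Lemma~\ref{lemma:z basis}): one can choose $I\subset\Lambda$ with $|I|=D_{0}$ and a (non-orthonormal) basis $\{\bm{\mu}_{z}\}_{z\in I}$ of $\mathfrak{h}_{0}$ with the localization property $\mu_{z}(z)\neq 0$ and $\mu_{z}(z')=0$ for all $z'\in I\setminus\{z\}$. This is precisely what makes your phrase ``color data at a site'' well-defined. With $\hat{a}_{z,\alpha}^{\dag}=\sum_{x}\mu_{z}(x)\hat{c}_{x,\alpha}^{\dag}$, the relation $\{\hat{c}_{z,\alpha},\hat{a}_{z',\beta}^{\dag}\}=\delta_{\alpha\beta}\delta_{z z'}\mu_{z}(z)$ means that imposing $\hat{c}_{z,\alpha}\hat{c}_{z,\beta}\ket{\Phi}=0$ at the sites $z\in I$ forces any zero-energy state with $N_{c}=D_{0}$ into the form $\ket{\Phi}=\sum_{\bm{\alpha}}C(\bm{\alpha})\prod_{z\in I}\hat{a}_{z,\alpha_{z}}^{\dag}\ket{0}$: each $z\in I$ carries exactly one color. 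The no-double-occupancy constraints at the remaining sites $x\in\Lambda\setminus I$ then give $C(\bm{\alpha})=C(\bm{\alpha}_{z_{1}\leftrightarrow z_{2}})$ whenever $\mu_{z_{1}}(x)\mu_{z_{2}}(x)\neq 0$, so ferromagnetism is equivalent to \emph{connectivity} of the basis $\{\bm{\mu}_{z}\}$ (Lemma~\ref{lemma:SUn ferromagnetism and connectivity}). A separate, purely linear-algebraic lemma (Lemma~\ref{lemma:connectivity and irreducibility}) then shows that this connectivity is equivalent to irreducibility of $((\mathsf{P}_{0})_{x,y})_{x,y\in\Lambda_{0}}$. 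No highest-weight argument is used; uniqueness is established directly in each sector $\mathcal{H}_{M_{1},\dots,M_{N}}$.

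Your plan to bypass the localized basis via the identity $\hat{c}_{x,\alpha}\ket{\Phi}=\sum_{y}(\mathsf{P}_{0})_{x,y}\hat{c}_{y,\alpha}\ket{\Phi}$ does not, as stated, yield a site-local exchange statement. Combining it with $\hat{c}_{x,\alpha}\hat{c}_{x,\beta}\ket{\Phi}=0$ only produces the quadratic relation $\sum_{y,y'}(\mathsf{P}_{0})_{x,y}(\mathsf{P}_{0})_{x,y'}\hat{c}_{y,\alpha}\hat{c}_{y',\beta}\ket{\Phi}=0$, from which one cannot isolate a single pair $(y,y')$ with $(\mathsf{P}_{0})_{y,y'}\neq 0$; and the highest-weight relations $\hat{F}^{\alpha,\beta}\ket{\Phi}=0$ are global sums over sites, so they do not localize the constraint either. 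To extract a genuine ``swap colors at adjacent sites'' relation you need a representation of $\ket{\Phi}$ in which $D_{0}$ distinguished sites each carry a single color --- and producing such a representation is exactly what the localized basis of Lemma~\ref{lemma:z basis} is for. Any completion of your sketch will have to introduce an equivalent device.
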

\subsection{Proof of Theorem~\ref{thm:SUn flat band ferro} \label{subsec:proof of thm 1}}
In this subsection, we prove Theorem~\ref{thm:SUn flat band ferro}.
We first prove the following lemma.
\begin{lemma}\label{lemma:z basis}
    One can take a subset $I \subset \Lambda$ with $|I| = D_0$ and a basis $\{\bm{\mu}_z\}_{z \in I}$ of $\mathfrak{h}_0$ in such a way that for each $z \in I$, the basis vector $\bm{\mu}_z = \left(\mu_z(x)\right)_{x \in \Lambda}$ satisfies $\mu_z(z) \neq 0$ and $\mu_z(z') = 0$ for any $z' \in I\backslash \{z\}$.
\end{lemma}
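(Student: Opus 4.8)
The plan is to realize $\mathfrak{h}_0$ as the row space of a matrix and then put that matrix into reduced row echelon form. Fix any basis $\{\bm{\nu}_1,\dots,\bm{\nu}_{D_0}\}$ of $\mathfrak{h}_0$ and let $\mathsf{M}$ be the $D_0\times|\Lambda|$ matrix whose $j$-th row is $\bm{\nu}_j$, i.e.\ $M_{j,x}=\nu_j(x)$. Since the rows are linearly independent, $\mathrm{rank}\,\mathsf{M}=D_0$, and because the column rank equals the row rank there is a subset $I\subset\Lambda$ with $|I|=D_0$ such that the $D_0\times D_0$ submatrix $\mathsf{M}_I=(M_{j,z})_{1\le j\le D_0,\,z\in I}$ is invertible.

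Next I would use $\mathsf{M}_I$ to perform a change of basis inside $\mathfrak{h}_0$. Define $\bm{\mu}_z=\sum_{j=1}^{D_0}(\mathsf{M}_I^{-1})_{z,j}\,\bm{\nu}_j$ for $z\in I$; equivalently, the $\bm{\mu}_z$ are the rows of $\mathsf{M}_I^{-1}\mathsf{M}$, labelled naturally by $I$. Because $\mathsf{M}_I^{-1}$ is invertible, $\{\bm{\mu}_z\}_{z\in I}$ is again a basis of $\mathfrak{h}_0$, and $|I|=D_0$ as required. By construction the columns of $\mathsf{M}_I^{-1}\mathsf{M}$ indexed by $z'\in I$ assemble into the $D_0\times D_0$ identity matrix, so $\mu_z(z')=\delta_{z,z'}$ for all $z,z'\in I$. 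In particular $\mu_z(z)=1\neq 0$ and $\mu_z(z')=0$ for every $z'\in I\setminus\{z\}$, which is precisely the assertion of the lemma.

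I do not anticipate a genuine obstacle: the statement is the linear-algebra fact that any subspace of $\mathbb{C}^{|\Lambda|}$ admits a basis in reduced row echelon form relative to a suitable choice of pivot coordinates, and the only points requiring care are the standard equality of row rank and column rank (which guarantees that the invertible submatrix $\mathsf{M}_I$ exists) and the observation that left multiplication by $\mathsf{M}_I^{-1}$ merely re-expresses the same span in a new basis. One could equivalently phrase the whole argument as Gauss--Jordan elimination applied to $\mathsf{M}$, taking $I$ to be the set of pivot columns. It is also worth remarking that the normalization $\mu_z(z)=1$ obtained here is convenient for the later steps in the proof of Theorem~\ref{thm:SUn flat band ferro}: the associated single-particle states $\sum_{x}\mu_z(x)\,\hat{c}^\dagger_{x,\alpha}\ket{0}$ then act, as far as the coordinates in $I$ are concerned, just like the bare operators $\hat{c}^\dagger_{z,\alpha}$, which makes counting occupancies on $I$ straightforward.
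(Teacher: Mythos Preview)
Your proof is correct and in fact slightly cleaner than the paper's, but the strategy is genuinely different. The paper works from the side of the hopping matrix $\mathsf{T}$: since $\mathrm{rank}\,\mathsf{T}=|\Lambda|-D_0$, there is a subset $\Lambda'\subset\Lambda$ of that size for which the principal submatrix $(t_{x,y})_{x,y\in\Lambda'}$ is nonsingular, and one sets $I=\Lambda\setminus\Lambda'$. For each $z\in I$ the $(|\Lambda'|+1)\times(|\Lambda'|+1)$ submatrix on $\Lambda'\cup\{z\}$ is singular, and a null vector $\tilde{\bm{\mu}}_z$ of it, extended by zeros on $I\setminus\{z\}$, satisfies $\langle\bm{\mu}_z,\mathsf{T}\bm{\mu}_z\rangle=0$; the assumption $\mathsf{T}\ge 0$ is then invoked to conclude $\mathsf{T}\bm{\mu}_z=0$. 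By contrast you never touch $\mathsf{T}$: you start from an arbitrary basis of $\mathfrak{h}_0$, choose $I$ as a set of pivot columns of the basis matrix, and row-reduce. Your argument is essentially the reduced row echelon form and uses nothing beyond row rank equals column rank, so it applies to any subspace and does not require $\mathsf{T}\ge 0$; it also yields the convenient normalization $\mu_z(z')=\delta_{z,z'}$ directly. The paper's approach, on the other hand, ties the index set $I$ explicitly to the structure of $\mathsf{T}$ (as the complement of a maximal nonsingular principal block), which can be conceptually helpful when one wants to picture the localized states in concrete lattices such as the delta chain, but is not logically necessary for the lemma itself.
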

\begin{proof}
Our proof is essentially the same as the proof of Lemma 11.16 in Ref.~\cite{tasaki2020physics}.
We see that the rank of $\mathsf{T}$ is $|\Lambda| - D_0$ since $\dim{\mathfrak{h}_0} = D_0$.
Then there exists a subset $\Lambda' \subset \Lambda$ with $|\Lambda'| = |\Lambda| - D_0$ such that the determinant of the submatrix $\left(t_{x, y}\right)_{x, y \in \Lambda'}$ is nonzero, and any $(|\Lambda| - D_0 + 1) \times (|\Lambda| - D_0 + 1)$ submatrix of $\mathsf{T}$ has determinant zero~\cite{prasolov1994problems}.
Let $I = \Lambda \backslash \Lambda'$. 
We find that, for arbitrary $z \in I$, the submatrix $\left(t_{x, y}\right)_{x, y \in \Lambda' \cup \{z\}}$ has determinant zero, which implies that this matrix has a zero eigenvalue.
We denote the corresponding eigenvector by $\bm{\tilde{\mu}}_z = \left(\tilde{\mu}_z(x)\right)_{x \in \Lambda' \cup \{z\}}$.
We can see that $\tilde{\mu}_z(z) \neq 0$.
This is because if $\tilde{\mu}_z(z) = 0$, then $\left(\tilde{\mu}_z(x)\right)_{x \in \Lambda'}$ is an eigenvector of $\left(t_{x, y}\right)_{x, y \in \Lambda'}$ with eigenvalue zero.
This contradicts that the matrix $\left(t_{x, y}\right)_{x, y \in \Lambda'}$ has nonzero determinant.
Thus, we have $\mu_z(z) \neq 0$.
We then define a $|\Lambda|$-dimensional vector $\bm{\mu}_z = \left(\mu_z(x)\right)_{x \in \Lambda}$ as 
\begin{align}
    \mu_z(x) = \begin{cases}
        \tilde{\mu}_z(x) \ \ \text{if} \ x \in \Lambda' \cup \{z\}, \\
        0 \ \ \text{otherwise},
    \end{cases}
\end{align}
for $z \in I$.
We note that $\mu_z(z) \neq 0$ for $z \in I$ and $\mu_z(z') = 0$ for $z' \in I \backslash \{z\}$.
Using $\sum_{y \in \Lambda' \cup \{z\}} t_{x, y} \tilde{\mu}_z(y) = 0$, we can see that 
\begin{align}
    \langle \bm{\mu}_z, \mathsf{T} \bm{\mu}_z \rangle = 0.
\end{align}
Because of the positive semidefiniteness of $\mathsf{T}$, it holds that $\mathsf{T} \bm{\mu}_z = 0$.
Since $\mu_z(z) \neq 0$ and $\mu_z(z') = 0$, the set $\{\bm{\mu}_z\}_{z \in I}$ is linearly independent, and hence it is a basis of $\mathfrak{h}_0$ satisfying the conditions of Lemma~\ref{lemma:z basis}.
\end{proof}
In the example shown in Fig.~\ref{fig:delta chain}, the subset $I$ can be taken as the entire set of the black sites.
In this case, the vector $\bm{\mu}_z$ is localized at a black site.
This vector has nonzero components only at the black site and the two white sites adjacent to it.
See Fig.~\ref{fig:zero-energy state}

\begin{figure}
    \centering
    \includegraphics[width=0.4\columnwidth]{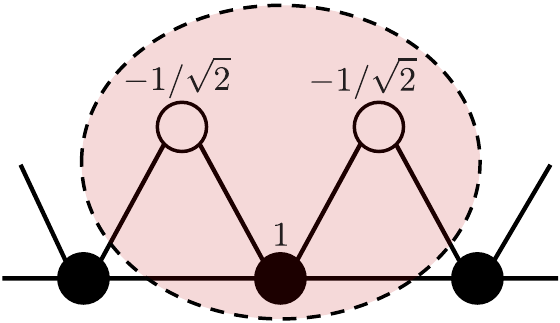}
    \caption{In the delta chain, the vector $\bm{\mu}_z$ satisfying the condition of Lemma~\ref{lemma:z basis} can be defined to be localized at the black site.
    This vector has a component of $1$ at the black site $z$ and $-1/\sqrt{2}$ at the two white sites adjacent to $z$.}\label{fig:zero-energy state}
\end{figure}

With the basis $\{\bm{\mu}_z\}_{z \in I}$, we can characterize $\Lambda_0$ as 
\begin{align}
    \Lambda_0 = \{x \in \Lambda| \mu_z(x) \neq 0 \ \ \text{for some} \ z \in I\}. \label{eq:characterization of Lambda_0}
\end{align}
This can be seen as follows.
Let $\{\bm{\psi}_i\}_{i=1, \dots, D_0}$ be an orthonormal basis of $\mathfrak{h}_0$.
The projection matrix $\mathsf{P}_0$ is written as $\left(\mathsf{P}_0\right)_{x, y} = \sum_{i=1}^{D_0} \psi_i(x) \psi_i(y)^*$.
Suppose that $\mu_z(x) = 0$ for all $z \in I$ for some $x \in \Lambda$.
Then we see that $\psi_i(x) = 0$ for all $i = 1, \dots, D_0$ since the vector $\bm{\psi}_i$ can be written as a linear combination of $\{\bm{\mu}_z\}_{z \in I}$.
Therefore, we have $\left(\mathsf{P}_0\right)_{x, x} = \sum_{i=1}^{D_0} \psi_i(x) \psi_i(x)^* = 0$, which means that $x \notin \Lambda_0$ if $\mu_z(x) = 0$ for all $z \in I$.
Conversely, suppose that for some $x \in \Lambda$, there exists $z \in I$ such that $\mu_z(x) \neq 0$.
Let $\mathsf{P}_z$ be the projection matrix onto the one-dimensional subspace spanned by $\bm{\mu}_z$.
Then we see that $\left(\mathsf{P}_0\right)_{x, x} \geq \left(\mathsf{P}_z\right)_{x, x}$.
Since $\mu_z(x) \neq 0$, $\left(\mathsf{P}_z\right)_{x, x} > 0$.
Therefore, we have $\left(\mathsf{P}_0\right)_{x, x} \neq 0$, and hence, $x \in \Lambda_0$.
Thus, $\Lambda_0 = \{x \in \Lambda| \mu_z(x) \neq 0 \ \ \text{for some} \ \ z \in I\}$.
For example, in the delta chain, the set of the eigenvectors with eigenvalue zero $\{\bm{\mu}_z\}_{z \in I}$ can cover the entire lattice system, thus $\Lambda_0 = \Lambda$.
\par
We write $\bm{\mu}_z \sim \bm{\mu}_{z'}$ if there is a site $x \in \Lambda$ such that $\mu_z(x) \mu_{z'}(x) \neq 0$.
We say that the basis $\{\bm{\mu}_{z}\}_{z \in I}$ is connected if there is a sequence $\{z_{i}\}_{i=0, \dots, n}$ with $z_i \in I$ such that $z_0 = z$, $z_n = z'$, and $\bm{\mu}_{z_{i-1}} \sim \bm{\mu}_{z_i}$ for $i=1, \dots, n$.
We also write $\bm{\mu}_z \nsim \bm{\mu}_{z'}$ if there is no site $x$ such that $\mu_z(x) \mu_{z'}(x) \neq 0$. 
\par 
Then we can prove the following lemma.
\begin{lemma}\label{lemma:SUn ferromagnetism and connectivity}
    Consider the SU($N$) Hubbard model with $N_c = D_0$.
    The model exhibits the SU($N$) ferromagnetism if and only if the basis $\{\bm{\mu}_z\}_{z \in I}$ is connected.
\end{lemma}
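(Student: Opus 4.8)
The plan is to first pin the ground states down concretely, then turn the question into a linear-algebra statement about the coefficients of a zero-energy state expanded in the localized basis of Lemma~\ref{lemma:z basis}, and finally dispatch the two implications separately. Set $\hat{b}^\dagger_{z,\alpha}=\sum_{x\in\Lambda}\mu_z(x)\hat{c}^\dagger_{x,\alpha}$ for $z\in I$. Because $\mathsf{T}\ge0$ we have $\hat{H}_{\mathrm{hop}}\ge0$, and $\hat{H}_{\mathrm{int}}\ge0$ trivially, so the ground-state energy of $\hat{H}_{\mathrm{Hub}}$ on $\mathcal{H}_{D_0}(\Lambda)$ is non-negative; it is $0$ because $\prod_{z\in I}\hat{b}^\dagger_{z,1}\ket{0}$ is nonzero (the $\bm{\mu}_z$ are linearly independent) and is annihilated both by $\hat{H}_{\mathrm{hop}}$ (since $\bm{\mu}_z\in\ker\mathsf{T}$) and by $\hat{H}_{\mathrm{int}}$ (it carries one color). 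Hence $\ket{\Phi}\in\mathcal{H}_{D_0}(\Lambda)$ is a ground state iff $\hat{H}_{\mathrm{hop}}\ket{\Phi}=0$ and $\hat{H}_{\mathrm{int}}\ket{\Phi}=0$, i.e.\ iff the color-$\alpha$ part of $\ket{\Phi}$ lies in $\bigwedge^{M_\alpha}\mathfrak{h}_0$ for each $\alpha$ and (with the Pauli principle) no site carries more than one fermion. The ground-state space is $\mathrm{SU}(N)$-invariant and contains the multiplet generated from $\prod_{z\in I}\hat{b}^\dagger_{z,1}\ket{0}$, namely $\mathrm{Sym}^{D_0}(\mathbb{C}^N)$, whose weight space $(M_1,\dots,M_N)$ is one-dimensional for every $(M_\alpha)$ with $\sum_\alpha M_\alpha=D_0$; so for the ``if'' part it will be enough to show that the ground-state space has dimension at most one in each sector $\mathcal{H}_{M_1,\dots,M_N}(\Lambda)$.

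For ``SU($N$) ferromagnetism $\Rightarrow$ connected'' I argue by contraposition. If $\{\bm{\mu}_z\}_{z\in I}$ is not connected, write $I=I_1\sqcup I_2$ with both parts nonempty and $\bm{\mu}_z\nsim\bm{\mu}_w$ for all $z\in I_1$, $w\in I_2$; then the union $S_1$ of the supports of $\{\bm{\mu}_z\}_{z\in I_1}$ is disjoint from the union $S_2$ of the supports of $\{\bm{\mu}_w\}_{w\in I_2}$. The state $\ket{\Psi}=\bigl(\prod_{z\in I_1}\hat{b}^\dagger_{z,1}\bigr)\bigl(\prod_{w\in I_2}\hat{b}^\dagger_{w,2}\bigr)\ket{0}$ is then a nonzero ground state: it sits in the flat band and has no double occupancy, its color-$1$ fermions being supported on $S_1$ and its color-$2$ fermions on $S_2$. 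It is not a fully polarized state, since swapping colors $1$ and $2$ replaces the color-$1$ orbital space $\mathrm{span}\{\bm{\mu}_z\}_{z\in I_1}$, supported on $S_1$, by $\mathrm{span}\{\bm{\mu}_w\}_{w\in I_2}$, supported on the disjoint set $S_2$, so the swapped state is linearly independent of $\ket{\Psi}$; hence $\ket{\Psi}$ is not color-symmetric, and by the remark after the definition of SU($N$) ferromagnetism it is not fully polarized. Thus the model is not SU($N$) ferromagnetic.

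For ``connected $\Rightarrow$ SU($N$) ferromagnetism'', fix a sector $\mathcal{H}_{M_1,\dots,M_N}(\Lambda)$ and a ground state $\ket{\Phi}$ in it; since $\{\bm{\mu}_z\}_{z\in I}$ is a basis of $\mathfrak{h}_0$, it has a unique expansion $\ket{\Phi}=\sum_{(A_1,\dots,A_N)}f(A_1,\dots,A_N)\bigl(\prod_{\alpha}\prod_{z\in A_\alpha}\hat{b}^\dagger_{z,\alpha}\bigr)\ket{0}$ over tuples of $A_\alpha\subseteq I$ with $|A_\alpha|=M_\alpha$. Using $\{\hat{c}_{x,\alpha},\hat{b}^\dagger_{z,\alpha}\}=\mu_z(x)$ I rewrite the conditions $\hat{c}_{x,\beta}\hat{c}_{x,\alpha}\ket{\Phi}=0$ ($\alpha<\beta$, $x\in\Lambda_0$) as linear relations among the $f$'s. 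Taking $x=z_0\in I$, where $\mu_{z'}(z_0)=\delta_{z_0,z'}\mu_{z_0}(z_0)$ for $z'\in I$ by Lemma~\ref{lemma:z basis}, the relation forces $f(A_1,\dots,A_N)=0$ as soon as some $z_0$ belongs to two of the $A_\alpha$; since $\sum_\alpha|A_\alpha|=D_0=|I|$, $f$ is then supported on the ordered partitions of $I$, i.e.\ on maps $c:I\to\{1,\dots,N\}$ with $|c^{-1}(\alpha)|=M_\alpha$. Taking a general $x$, the relation attached to such a $c$ with $c(p)\ne c(q)$ reduces, the other terms now vanishing, to $\mu_p(x)\mu_q(x)\bigl(f(c)\pm f(c\circ(p\,q))\bigr)=0$, so that $f(c\circ(p\,q))=\pm f(c)$ whenever $\bm{\mu}_p\sim\bm{\mu}_q$. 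Connectivity means the transpositions $(p\,q)$ with $\bm{\mu}_p\sim\bm{\mu}_q$ generate the full symmetric group on $I$, hence link any two maps $c$ of the given profile; so $f$ is determined up to a single scalar, the ground-state space has dimension at most one in this sector, and --- the sector being arbitrary --- it coincides with $\mathrm{Sym}^{D_0}(\mathbb{C}^N)$, which is SU($N$) ferromagnetism.

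I expect the main obstacle to be the ``if'' direction: keeping track of the fermionic signs in the relations $f(c\circ(p\,q))=\pm f(c)$ and verifying that the chain of transpositions furnished by connectivity propagates them consistently --- which it must, since $\mathrm{Sym}^{D_0}(\mathbb{C}^N)$ already supplies a nonzero solution in every sector --- and, before that, checking carefully that evaluating the no-double-occupancy relations on the sites of $I$ does isolate the partition structure. This is the $N$-color counterpart of Mielke's and Tasaki's analysis of the $\mathrm{SU}(2)$ case, and it is here that the hypotheses $\mathsf{T}\ge0$ and $N_c=D_0$, making the flat band exactly filled, enter in an essential way.
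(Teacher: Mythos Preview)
Your proof follows essentially the same strategy as the paper's: establish positivity of both pieces of $\hat{H}_{\mathrm{Hub}}$, characterize the zero-energy states as built from the localized flat-band creation operators, use the no-double-occupancy constraints at sites $x\in I$ to force the partition structure and at general $x$ to derive the swap relations, and let connectivity propagate these to uniqueness in each weight sector. Your dimension-counting shortcut for the signs (``regardless of the $\pm$, the solution space has dimension $\le 1$, and it is nonzero because the fully polarized state is already there'') is a clean alternative to the paper's explicit computation, which tracks the fermionic signs and obtains $C(\bm{\gamma})=+C(\bm{\gamma}_{z_1\leftrightarrow z_2})$ directly.

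One step needs tightening. In the ``only if'' direction you argue that $\ket{\Psi}$ is not fully polarized because the global color swap $1\leftrightarrow 2$ sends it to a linearly independent state and hence $\ket{\Psi}$ is ``not color-symmetric''. But the relevant notion of color symmetry (the paper's remark after Definition~1) is symmetry under permuting the \emph{particle} color labels, not invariance under a global $\mathrm{SU}(N)$ Weyl reflection; indeed the fully polarized state in a sector with $M_1\ne M_2$ is itself sent by the global swap to a different sector, so non-invariance under the swap does not by itself exclude full polarization. The easy fix is to observe that the fully polarized state in the sector $(|I_1|,|I_2|,0,\dots,0)$ is the uniform sum $\sum_{c}\prod_{z\in I}\hat{b}^\dagger_{z,c(z)}\ket{0}$ over all color maps with that profile, of which your $\ket{\Psi}$ is a single summand; since there are $\binom{D_0}{|I_1|}\ge 2$ linearly independent summands, $\ket{\Psi}$ is not proportional to that sum and hence not fully polarized. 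The paper itself dispatches this direction in a single sentence without detail, so this is a minor point.
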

\begin{proof}
Since the hopping matrix $\mathsf{T}$ is positive semidefinite, $\hat{H}_{\mathrm{hop}}$ is also positive semidefinite.
This can be seen as follows.
Since $\dim{\mathfrak{h}} = |\Lambda|$ and $\dim{\mathfrak{h}_0} = D_0$, there are $|\Lambda| - D_0$ linearly independent eigenvectors of $\mathsf{T}$ with positive eigenvalues.
We denote the eigenvectors by $\bm{\phi}_i$ ($i = 1, \dots, |\Lambda| - D_0$), which satisfy $\mathsf{T} \bm{\phi}_i = \lambda_i \bm{\phi}_i$ with $\lambda_i > 0$.
Since they can always be taken to be orthogonal to each other, we assume that $\langle \bm{\phi}_i, \bm{\phi}_j \rangle = \delta_{i, j}$.
With the vectors $\bm{\mu}_z$ and $\bm{\phi}_i$, we define a new set of operators
\begin{align}
    \hat{a}_{z, \alpha}^\dag &= \sum_{x \in \Lambda} \mu_z(x) \hat{c}_{x, \alpha}^\dag, \label{eq:a operator} \\
    \hat{b}_{i, \alpha}^\dag &= \sum_{x \in \Lambda} \phi_i(x) \hat{c}_{x, \alpha}^\dag. \label{eq:b operator}
\end{align}
They satisfy 
\begin{align}
    \{\hat{a}_{z, \alpha}, \hat{a}_{w, \beta}\} &= \{\hat{b}_{i, \alpha}, \hat{b}_{j, \beta}\} = \{\hat{a}_{z, \alpha}, \hat{b}_{i, \beta}\} = 0, \\
    \{a_{z, \alpha}, a_{w, \beta}^\dag\} &= \delta_{\alpha, \beta} \langle \bm{\mu}_{z}, \bm{\mu}_{w}\rangle, \\ 
    \{\hat{b}_{i, \alpha}, \hat{b}_{j, \beta}^\dag\} &= \delta_{\alpha, \beta} \delta_{i, j}, \\ 
    \{\hat{a}_{z, \alpha}, \hat{b}_{i, \beta}^\dag\} &= 0, \label{eq:ac relation a and b}
\end{align}
where the last line follows since $\langle \bm{\mu}_z, \bm{\phi}_i\rangle = 0$.
Because the hopping matrix can be written as $t_{x, y} = \sum_{i=1}^{|\Lambda| - D_0} \lambda_i \phi_i(x) \phi_i(y)^*$, we can represent the hopping Hamiltonian as 
\begin{align}
    \hat{H}_{\mathrm{hop}} = \sum_{\alpha = 1}^N \sum_{i=1}^{|\Lambda| - D_0} \lambda_i \hat{b}_{i, \alpha}^\dag \hat{b}_{i, \alpha}. \label{eq:hopping Hamiltonian bdagb}
\end{align}
Since $\hat{b}_{i, \alpha}^\dag \hat{b}_{i, \alpha} \geq 0$ and $\lambda_i > 0$, $\hat{H}_{\mathrm{hop}}$ is positive semidefinite.
The interaction term $\hat{H}_{\mathrm{int}}$ is also positive semidefinite because $\hat{n}_{x, \alpha} \hat{n}_{x, \beta} = (\hat{c}_{x, \alpha} \hat{c}_{x, \beta})^\dag \hat{c}_{x, \alpha} \hat{c}_{x, \beta} \geq 0$.
Hence, the entire Hamiltonian $\hat{H}_{\mathrm{Hub}}$ is positive semidefinite.
\par
We define the fully polarized states as follows. 
First, the fully polarized state with color $\alpha$ is given by 
\begin{align}
    \ket{\Phi_{\mathrm{all} \ \alpha}} = \left(\prod_{z \in I} \hat{a}_{z, \alpha}^\dag\right) \ket{\Phi_{\mathrm{vac}}}, \label{eq:all alpha}
\end{align}
where $\ket{\Phi_{\mathrm{vac}}}$ is the normalized vacuum state for $\hat{c}_{x, \alpha}$.
We can easily see that the $\ket{\Phi_{\mathrm{all} \ \alpha}}$ is an eigenstate of $\hat{H}_{\mathrm{Hub}}$ with zero energy.
Since $\hat{H}_{\mathrm{Hub}} \ge 0$, the state $\ket{\Phi_{\mathrm{all} \ \alpha}}$ is the ground state of $\hat{H}_{\mathrm{Hub}}$ in $\mathcal{H}_{D_0}(\Lambda)$.
Due to the SU($N$) symmetry, one can obtain other ground states of the following form
\begin{align}
    \ket{\Phi_{M_1, M_2, \dots, M_N}} = \left(\hat{F}^{N, 1}\right)^{M_N} \cdots \left(\hat{F}^{2, 1}\right)^{M_2} \ket{\Phi_{\mathrm{all}\ 1}}, \label{eq:fully polarized state}
\end{align}
where $M_1 = D_0 - \sum_{\alpha = 2}^{N} M_{\alpha}$.
We also refer to the states of the form~\eqref{eq:fully polarized state} as fully polarized states.
It is easily seen that 
\begin{align}
    \hat{C}_2 \ket{\Phi_{\mathrm{all}\ \alpha}} = \frac{D_0 (N-1)}{2} \left(\frac{D_0}{N} + 1\right) \ket{\Phi_{\mathrm{all} \ \alpha}}, \label{eq:phi all alpha has maximum C2}
\end{align}
which means that the state $\ket{\Phi_{\mathrm{all} \ \alpha}}$ has the maximum eigenvalue of $\hat{C}_2$ in $\mathcal{H}_{D_0}(\Lambda)$.
Because $[\hat{C}_2, \hat{F}^{\alpha, \beta}] = 0$, the fully polarized states of the form~\eqref{eq:fully polarized state} also have the same eigenvalue for $\hat{C}_2$.
Thus, all the fully polarized states are ground states of $\hat{H}_{\mathrm{Hub}}$ with the maximum eigenvalue of $\hat{C}_2$ in $\mathcal{H}_{D_0}(\Lambda)$.
\par
In the following, we prove that there are no other ground states if and only if $\{\bm{\mu}_z\}_{z \in I}$ is connected.
Let $\ket{\Phi_{\mathrm{GS}}}$ be an arbitrary ground state of $\hat{H}_{\mathrm{Hub}}$ in $\mathcal{H}_{D_0}(\Lambda)$.
In general, we can express the state as
\begin{align}
    \ket{\Phi_{\mathrm{GS}}} & = \sum_{\substack{I_1, \dots, I_N \subset I \\ \tilde{I}_1, \dots, \tilde{I}_N \subset \tilde{I}}} 
    f\left(\{I_{\alpha} \}, \{\tilde{I}_{\alpha}\} \right) \times \nonumber \\
    & \left(\prod_{z_1 \in I_1} \hat{a}_{z_1, 1}^\dag \right) \cdots
    \left(\prod_{z_N \in I_N} \hat{a}_{z_N, N}^\dag\right)
    \left(\prod_{i_1 \in \tilde{I}_1} \hat{b}_{i_1, 1}^\dag\right) \cdots 
    \left(\prod_{i_N \in \tilde{I}_N} \hat{b}_{i_N, N}^\dag\right) \ket{\Phi_{\mathrm{vac}}},
\end{align}
where $f(\{I_{\alpha}\}, \{\tilde{I}_{\alpha}\})$ is a coefficient, and $I_{\alpha}$ and $\tilde{I}_{\alpha}$ are subsets of $I$ and $\tilde{I} = \{1, 2, \dots, |\Lambda| - D_0\}$, respectively, such that $\sum_{\alpha=1}^N \left(|I_{\alpha}| + |\tilde{I}_{\alpha}|\right) = D_0$.
The ground state satisfies $\hat{H}_{\mathrm{Hub}} \ket{\Phi_{\mathrm{GS}} }= 0$, and the inequalities $\hat{H}_{\mathrm{hop}} \geq 0$ and $\hat{H}_{\mathrm{int}} \geq 0$ imply that $\hat{H}_{\mathrm{hop}} \ket{\Phi_{\mathrm{GS}}} = 0$ and $\hat{H}_{\mathrm{int}} \ket{\Phi_{\mathrm{GS}}} = 0$.
Since the hopping Hamiltonian can be expressed as in Eq.~\eqref{eq:hopping Hamiltonian bdagb} and $\hat{b}_{i, \alpha}^\dag \hat{b}_{i, \alpha} \geq 0$ for all $ i = 1, \dots, |\Lambda| - D_0$ and $\alpha$, the condition $\hat{H}_{\mathrm{hop}} \ket{\Phi_{\mathrm{GS}}} = 0$ leads to 
\begin{align}
    \hat{b}_{i, \alpha} \ket{\Phi_{\mathrm{GS}}} = 0 \ \ \text{for all} \ i=1, \dots, |\Lambda|- D_0 \ \text{and} \ \alpha = 1, \dots, N.
    \label{eq:b=0}
\end{align}
Similarly, the equation $\hat{H}_{\mathrm{int}} \ket{\Phi_{\mathrm{GS}}} = 0$ reduces to 
\begin{align}
    \hat{c}_{x, \alpha} \hat{c}_{x, \beta} \ket{\Phi_{\mathrm{GS}}} = 0 \ \ \text{for any} \ x \in \Lambda  \ \text{and} \ \alpha \neq \beta.
    \label{eq:int=0}
\end{align}
Here we consider the condition~\eqref{eq:b=0}.
Noting the anticommutation relations~\eqref{eq:ac relation a and b}, we find that the ground state $\ket{\Phi_{\mathrm{GS}}}$ consists only of $\hat{a}_{z, \alpha}^\dag$ operators, i.e., the state $\ket{\Phi_{\mathrm{GS}}}$ is written as 
\begin{align}
    \ket{\Phi_{\mathrm{GS}}} = \sum_{I_1, \dots, I_N \subset I} g(\{I_\alpha \}) \left(\prod_{z_1 \in I_1} \hat{a}_{z_1, 1}^\dag \right) 
    \cdots
    \left(\prod_{z_N \in I_N} \hat{a}_{z_N, N}^\dag \right) \ket{\Phi_{\mathrm{vac}}}, 
\end{align}
where $g(\{I_{\alpha}\})$ is a coefficient, and $\sum_{\alpha = 1}^{N} |I_{\alpha}| = D_0$.
\par 
Then we examine Eq.~\eqref{eq:int=0}.
We first consider the case where $x = z \in I$.
In general, it holds that 
\begin{align}
    \{\hat{c}_{x, \alpha}, \hat{a}_{z, \beta}^\dag\} = \delta_{\alpha, \beta} \mu_{z}(x), \label{eq:anticommutation rel c and a}
\end{align}
for any $x \in \Lambda$ and $z \in I$.
In particular, when $x = z \in I$, we have the following anticommutation relations 
\begin{align}
    \{\hat{c}_{z, \alpha}, \hat{a}_{z', \beta}^\dag\} = \delta_{\alpha, \beta} \delta_{z, z'} \mu_z(z) \label{eq:anticommutation rel c and a, x=z}
\end{align}
for all $z, z' \in I $.
Using the anticommutation relations~\eqref{eq:anticommutation rel c and a, x=z}, we find that $g(\{I_{\alpha}\}) = 0$ if there is a pair of colors $\alpha$ and $\beta$ such that $I_{\alpha} \cap I_{\beta} \neq \emptyset$.
Since $\sum_{\alpha = 1}^N |I_{\alpha}| = D_0$, we have $\cup_{\alpha =1}^N I_{\alpha} = I$ when $I_{\alpha} \cap I_{\beta} = \emptyset$.
Therefore, the ground state takes the form
\begin{align}
    \ket{\Phi_{\mathrm{GS}}} = \sum_{\bm{\alpha}} C(\bm{\alpha}) \left(\prod_{z \in I} \hat{a}_{z, \alpha_z}^\dag\right) \ket{\Phi_{\mathrm{vac}}},
\end{align}
where $\bm{\alpha} = (\alpha_{z})_{z \in I}$ represents a color configuration over $I$, and the sum is taken over all possible color configurations.
\par 
We next consider Eq.~\eqref{eq:int=0} for $x \in \Lambda \backslash I$.
With the use of Eq.~\eqref{eq:anticommutation rel c and a}, the condition~\eqref{eq:int=0} yields 
\begin{align}
    &\sum_{z_1 < z_2} \sum_{
    \substack{\bm{\gamma} \\ \gamma_{z_1} = \beta, \gamma_{z_2} = \alpha}} \mathrm{sgn}(z_1, z_2; I) \mu_{z_1}(x) \mu_{z_2}(x) \nonumber \\
    &\times \left(C(\bm{\gamma}) - C(\bm{\gamma}_{z_1 \leftrightarrow z_2})\right) \left(\prod_{z \in I \backslash \{z_1, z_2\}} \hat{a}_{z, \gamma_z}^\dag\right) \ket{\Phi_{\mathrm{vac}}} = 0,
\end{align}
where $\mathrm{sgn}(z_1, z_2; I)$ is a sign factor arising from exchanges of the fermion operators.
The configuration $\bm{\gamma}_{z_1 \leftrightarrow z_2}$ is obtained from $\bm{\gamma}$ by swapping $\gamma_{z_1}$ and $\gamma_{z_2}$.
Since all the states in the sum are linearly independent, we find that 
\begin{align}
    \mu_{z_1}(x) \mu_{z_2}(x) \left(C(\bm{\gamma}) - C(\bm{\gamma}_{z_1 \leftrightarrow z_2})\right) = 0.
\end{align}
for all $z_1, z_2 \in I$ such that $z_1 \neq z_2$ and $\bm{\gamma}$.
When $z_1$ and $z_2$ satisfy $\bm{\mu}_{z_1} \sim \bm{\mu}_{z_2}$, then by definition, there is a site $x \in \Lambda \backslash I$ such that $\mu_{z_1}(x) \mu_{z_2}(x) \neq 0$.
Consequently, we have $C(\bm{\gamma}) = C(\bm{\gamma}_{z_1 \leftrightarrow z_2})$ if $\bm{\mu}_{z_1} \sim \bm{\mu}_{z_2}$.
When the basis $\{\bm{\mu}_z\}_{z \in I}$ is connected, for any $z, z' \in I$, we can take a sequence $z_1=z, z_2, \dots, z_n=z'$ such that $\bm{\mu}_{z_i} \sim \bm{\mu}_{z_{i+1}}$.
Thus, we see $C(\bm{\gamma}) = C(\bm{\gamma}_{z \leftrightarrow z'})$ for any $z, z' \in I$.
By noting that any permutation of $\bm{\alpha}$ can be obtained by repeatedly swapping two colors, we have 
\begin{align}
    C(\bm{\alpha}) = C(\bm{\beta}), \label{eq:Calpha = Cbeta}
\end{align}
where $\bm{\beta}$ is any color configuration obtained as a permutation of $\bm{\alpha}$.
As a result, when $M_{\alpha}$ in Eq.~\eqref{eq:Hilbert space with fixed color} is fixed so that $\sum_{\alpha=1}^N M_{\alpha} = D_0$, the ground state is unique in $\mathcal{H}_{M_1, \dots, M_N}(\Lambda)$.
\par
We can verify that the states satisfying Eq.~\eqref{eq:Calpha = Cbeta} are actually the fully polarized states defined by Eqs.~\eqref{eq:all alpha} and~\eqref{eq:fully polarized state}. 
To see this, we use a concept of a word.
A word $w = (w_1, \dots, w_{D_0})$ is a sequence in which $w_i \in \{1, \dots, N\}$ for all $i=1, \dots, D_0$.
We denote the number of occurrences of $\alpha$ in $w$ by $|w_{\alpha}|$.
The set of words for which $|w_{\alpha}| = M_{\alpha}$ is defined by 
$W(M_1, \dotsm M_N) = \{w \,| \, |w_{\alpha}| = M_{\alpha}\ \text{for all}\ \alpha\}$.
For example, $W(2, 0, 1)$ consists of $(1, 1, 3)$, $(1, 3, 1)$, and $(3, 1, 1)$.
With the notation, the ground state in $\mathcal{H}_{M_1, \dots, M_N}(\Lambda)$ satisfying Eq.~\eqref{eq:Calpha = Cbeta} is written as 
\begin{align}
    \ket{\tilde{\Phi}_{M_1, \dots, M_N}} = \sum_{w \in W(M_1, \dots, M_N)} \hat{a}_{z_1, w_1}^\dag \hat{a}_{z_2, w_2}^\dag \cdots \hat{a}_{z_{D_0}, w_{D_0}}^\dag \ket{\Phi_{\mathrm{vac}}},
\end{align}
where we have denoted the set $I$ by $I = \{z_1, z_2, \dots, z_{D_0}\}$.
Noting the commutation relations $[\hat{F}^{\alpha, \beta}, \hat{a}_{z, \gamma}^\dag] = \delta_{\alpha, \gamma} \hat{a}_{z, \beta}^\dag$, we see that 
\begin{align}
    (\hat{F}^{2, 1})^{M_2} \ket{\Phi_{\mathrm{all} \ 1}} = M_2 ! \sum_{w \in W(D_0 - M_2, M_2, 0, \dots, 0)} \hat{a}_{z_1, w_1}^\dag \hat{a}_{z_2, w_2}^\dag \cdots \hat{a}_{z_{D_0}, w_{D_0}}^\dag \ket{\Phi_{\mathrm{vac}}}.
\end{align}
By repeating the same procedure, we obtain $\ket{\tilde{\Phi}_{M_1, \dots, M_N}}$ is the same as $\ket{\Phi_{M_1, \dots, M_N}}$ up to a normalization.
Therefore, the unique ground state $\ket{\tilde{\Phi}_{M_1, \dots, M_N}}$ is the fully polarized state.
\par 
If the basis $\{\bm{\mu}\}_{z \in I}$ is not connected, we can construct a different ground state other than fully polarized states.
Hence, the SU($N$) Hubbard model exhibits the SU($N$) ferromagnetism if and only if the basis $\{\bm{\mu}_{z}\}_{z \in I}$ is connected.
\end{proof}
The degeneracy of the ground states does not depend on how we take $I$ and $\{\bm{\mu}_z\}_{z \in I}$.
Hence, the connectivity of the basis characterized by Lemma~\ref{lemma:z basis} is independent of the choice of $I$ and $\{\bm{\mu}_z\}_{z \in I}$.
\par
Finally, we show that the irreducibility of $\left(\left(\mathsf{P}_0\right)_{x, y}\right)_{x, y \in \Lambda_0}$ is equivalent to the connectivity of $\{\bm{\mu}_{z}\}_{z \in I}$.
\begin{lemma} \label{lemma:connectivity and irreducibility}
    The matrix $\left(\left(\mathsf{P}_0\right)_{x, y}\right)_{x, y \in \Lambda_0}$ is irreducible if and only if $\{\bm{\mu}_{z}\}_{z \in I}$ introduced in Lemma~\ref{lemma:z basis} is connected.
\end{lemma}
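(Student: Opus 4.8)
The plan is to route the equivalence through a basis-free intermediate statement: that the matrix $\bigl((\mathsf{P}_0)_{x,y}\bigr)_{x,y\in\Lambda_0}$ is reducible \emph{if and only if} $\mathfrak{h}_0$ admits an orthogonal decomposition $\mathfrak{h}_0=V_1\oplus V_2$ into two nonzero subspaces whose supports $S_1,S_2$ (where $\operatorname{supp}V=\bigcup_{\bm\phi\in V}\operatorname{supp}\bm\phi$) are disjoint, and then to identify non-connectivity of $\{\bm{\mu}_z\}_{z\in I}$ with this same condition. Two soft facts will be used throughout. First, since $(\mathsf{P}_0)_{x,x}=\langle\bm{e}_x,\mathsf{P}_0\bm{e}_x\rangle=\lVert\mathsf{P}_0\bm{e}_x\rVert^2$, one has $\Lambda_0=\{x\in\Lambda\mid\phi(x)\neq0\ \text{for some}\ \bm\phi\in\mathfrak{h}_0\}=\operatorname{supp}\mathfrak{h}_0$, and in particular every vector of $\mathfrak{h}_0$ is supported on $\Lambda_0$. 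Second, if two subspaces $V_1,V_2$ have disjoint supports $S_1,S_2$, they are orthogonal, the orthogonal projection onto $V_1\oplus V_2$ equals $\mathsf{Q}_1+\mathsf{Q}_2$ with $\mathsf{Q}_j$ the orthogonal projection onto $V_j$, and the matrix entries of $\mathsf{Q}_j$ vanish outside $S_j\times S_j$.

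For the direction ``not connected $\Rightarrow$ reducible'' I would argue directly. Assume $I=I_1\sqcup I_2$ with $I_1,I_2$ nonempty and $\bm\mu_z\nsim\bm\mu_w$ for all $z\in I_1$, $w\in I_2$, and put $V_j=\operatorname{span}\{\bm\mu_z\mid z\in I_j\}$ and $S_j=\bigcup_{z\in I_j}\operatorname{supp}\bm\mu_z$. The definition of $\nsim$ makes $S_1\cap S_2=\emptyset$, so $V_1\perp V_2$; since $\{\bm\mu_z\}_{z\in I}$ is a basis of $\mathfrak{h}_0$ by Lemma~\ref{lemma:z basis}, $\mathfrak{h}_0=V_1\oplus V_2$. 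By the second observation $(\mathsf{P}_0)_{x,y}=0$ for $x\in S_1$, $y\in S_2$; and because $S_1\cup S_2=\operatorname{supp}\mathfrak{h}_0=\Lambda_0$ with both $S_j$ nonempty, the decomposition $\Lambda_1=S_1$, $\Lambda_2=S_2$ witnesses the reducibility of $\bigl((\mathsf{P}_0)_{x,y}\bigr)_{x,y\in\Lambda_0}$.

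For the converse, suppose $\Lambda_0=\Lambda_1\sqcup\Lambda_2$ with $\Lambda_1,\Lambda_2$ nonempty and $(\mathsf{P}_0)_{x,y}=0$ whenever $x\in\Lambda_1$, $y\in\Lambda_2$ (hence also for $x\in\Lambda_2$, $y\in\Lambda_1$, by Hermiticity of $\mathsf{P}_0$). Let $\mathfrak{h}_0^{(j)}=\{\bm\phi\in\mathfrak{h}_0\mid\operatorname{supp}\bm\phi\subseteq\Lambda_j\}$. For $\bm\phi\in\mathfrak{h}_0$, split $\bm\phi=\bm\phi_1+\bm\phi_2$ along $\Lambda_1,\Lambda_2$ (legitimate since $\operatorname{supp}\bm\phi\subseteq\Lambda_0$); the vanishing block together with $\operatorname{supp}\mathfrak{h}_0\subseteq\Lambda_0$ shows $\mathsf{P}_0\bm\phi_j$ is supported on $\Lambda_j$, so $\mathsf{P}_0\bm\phi=\bm\phi$ forces $\mathsf{P}_0\bm\phi_1-\bm\phi_1=\bm\phi_2-\mathsf{P}_0\bm\phi_2$, a vector supported on $\Lambda_1$ and on $\Lambda_2$, hence zero; thus $\bm\phi_j\in\mathfrak{h}_0^{(j)}$ and $\mathfrak{h}_0=\mathfrak{h}_0^{(1)}\oplus\mathfrak{h}_0^{(2)}$, each summand nonzero since $\mathsf{P}_0\bm{e}_x\in\mathfrak{h}_0^{(j)}\setminus\{0\}$ for $x\in\Lambda_j$. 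To pull this back to the basis I would use that evaluation at $I$, $R\colon\mathfrak{h}_0\to\mathbb{C}^{I}$, $\bm\phi\mapsto(\phi(z))_{z\in I}$, is an isomorphism: by Lemma~\ref{lemma:z basis}, $R(\bm\mu_z)=\mu_z(z)\bm{e}_z$, and these $|I|=D_0$ vectors form a basis of $\mathbb{C}^{I}$. Since $\mathfrak{h}_0^{(j)}$ is supported on $\Lambda_j$, $R(\mathfrak{h}_0^{(j)})\subseteq\mathbb{C}^{I\cap\Lambda_j}$; as $R$ is injective and $\dim\mathfrak{h}_0^{(1)}+\dim\mathfrak{h}_0^{(2)}=D_0=|I\cap\Lambda_1|+|I\cap\Lambda_2|$, this forces $R(\mathfrak{h}_0^{(j)})=\mathbb{C}^{I\cap\Lambda_j}$ and hence $\mathfrak{h}_0^{(j)}=R^{-1}(\mathbb{C}^{I\cap\Lambda_j})$; in particular $I\cap\Lambda_j\neq\emptyset$ and $\bm\mu_z\in\mathfrak{h}_0^{(j)}$ whenever $z\in\Lambda_j$. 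Then for $z\in I\cap\Lambda_1$ and $w\in I\cap\Lambda_2$ the supports of $\bm\mu_z$ and $\bm\mu_w$ lie in the disjoint sets $\Lambda_1$, $\Lambda_2$, so $\bm\mu_z\nsim\bm\mu_w$; since $I\cap\Lambda_1$, $I\cap\Lambda_2$ is a partition of $I$ into nonempty pieces, $\{\bm\mu_z\}_{z\in I}$ is not connected, completing the converse and hence the lemma.

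The main obstacle is precisely this converse: reducibility constrains only the ``diagonal blocks'' of the orthogonal projection $\mathsf{P}_0$, whereas connectivity is phrased in terms of the concrete, non-orthogonal basis $\{\bm\mu_z\}$ of Lemma~\ref{lemma:z basis}, and there is no term-by-term locality tying $\mathsf{P}_0$ to the $\bm\mu_z$ (the change of basis involves the inverse Gram matrix). The device that bridges the gap is the observation that the $\bm\mu_z$ are, up to nonzero scalars, dual to the distinguished sites $I$, so evaluation at $I$ is an isomorphism $\mathfrak{h}_0\xrightarrow{\sim}\mathbb{C}^{I}$; this converts the abstract orthogonal splitting $\mathfrak{h}_0=\mathfrak{h}_0^{(1)}\oplus\mathfrak{h}_0^{(2)}$ into a coordinate splitting of $\mathbb{C}^{I}$ and pins each $\bm\mu_z$ inside a single block. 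The only other points requiring care are the two soft facts stated at the outset ($\operatorname{supp}\mathfrak{h}_0=\Lambda_0$ and additivity of orthogonal projections onto subspaces with disjoint supports), both of which are short linear-algebra arguments.
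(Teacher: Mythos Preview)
Your proof is correct. The forward direction (not connected $\Rightarrow$ reducible) is essentially the paper's argument. The converse, however, takes a genuinely different route.

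For the converse the paper does not work with the given basis $\{\bm\mu_z\}_{z\in I}$ at all. From the block decomposition $\mathsf{P}_0=\mathsf{P}_1+\mathsf{P}_2$ it reapplies Lemma~\ref{lemma:z basis} to each positive semidefinite matrix $\mathsf{I}-\mathsf{P}_j$, producing a \emph{new} basis $\{\bm\mu_z^{(1)}\}_{z\in I_1}\cup\{\bm\mu_{z'}^{(2)}\}_{z'\in I_2}$ of $\mathfrak{h}_0$ that also satisfies the conclusion of Lemma~\ref{lemma:z basis} and is manifestly disconnected; the passage back to the original basis is handled by the remark (argued just before Lemma~\ref{lemma:connectivity and irreducibility}, via Lemma~\ref{lemma:SUn ferromagnetism and connectivity}) that connectivity is independent of the particular choice of $I$ and $\{\bm\mu_z\}_{z\in I}$. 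Your argument bypasses both the second invocation of Lemma~\ref{lemma:z basis} and the appeal to basis-independence: by exploiting that the special form of $\{\bm\mu_z\}$ makes evaluation at $I$ an isomorphism $R\colon\mathfrak{h}_0\to\mathbb{C}^I$, the support splitting $\mathfrak{h}_0=\mathfrak{h}_0^{(1)}\oplus\mathfrak{h}_0^{(2)}$ is pulled back to a splitting of the index set $I$ itself, forcing each \emph{original} $\bm\mu_z$ into one of the summands. This is more self-contained and slightly more elementary; the paper's approach has the compensating virtue that the basis-independence of connectivity is made explicit as a byproduct.
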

\begin{proof}
We first show that non-connectivity of $\{\bm{\mu}_z\}_{z \in I}$ leads to reducibility of $\left(\left(\mathsf{P}_0\right)_{x, y}\right)_{x, y \in \Lambda_0}$. 
Let $I$ and $\{\bm{\mu}_{z}\}_{z \in I}$ be a subset and a basis characterized by Lemma~\ref{lemma:z basis}.
We assume that $I$ can be decomposed as $I = I_1 \cup I_2$ with $I_1 \cap I_2 = \emptyset$, $I_1 \neq \emptyset$ and $I_2 \neq \emptyset$, in which $\bm{\mu}_z \nsim \bm{\mu}_{z'}$ for any $z \in I_1$ and $z' \in I_2$.
When $\bm{\mu}_z \nsim \bm{\mu}_{z'}$,  it holds that $\langle\bm{\mu}_z, \bm{\mu}_{z'} \rangle = 0$ since $\mu_z(x) \mu_{z'}(x) = 0$ for any $x \in \Lambda$.
We then define subsets $\Lambda_1$ and $\Lambda_2$ as 
\begin{align}
    \Lambda_j = \{x \in \Lambda| \mu_z(x) \neq 0 \ \text{for some}\ z \in I_j\}\ \ \ \text{for} \ \ j=1, 2.
\end{align}
Because $\bm{\mu}_z \nsim \bm{\mu}_{z'}$ for any $z \in I_1$ and $z' \in I_2$, we see that $\Lambda_1 \cap \Lambda_2 = \emptyset$, and obviously $\Lambda_0 = \Lambda_1 \cup \Lambda_2$.
By taking a linear combination of $\{\bm{\mu}_z\}_{z \in I_1}$, we obtain an orthonormal set $\{\bm{\psi}^{(1)}_{i}\}_{i=1, \dots, |I_1|}$ which spans the same space spanned by $\{\bm{\mu}_{z}\}_{z \in I_1}$.
Similarly, we can take an orthonormal set $\{\bm{\psi}^{(2)}_{j}\}_{j=1, \dots, |I_2|}$ which spans the same space spanned by $\{\bm{\mu}_{z'}\}_{z' \in I_2}$.
The subspace $\mathfrak{h}_0$ is spanned by the orthonormal basis $\{\bm{\psi}^{(1)}_{i}\}_{i=1, \dots, |I_1|} \cup \{\bm{\psi}^{(2)}_{j}\}_{j=1, \dots, |I_2|}$, and the projection matrix $\mathsf{P}_0$ can be written as 
\begin{align}
    \left(\mathsf{P}_0\right)_{x, y} = \sum_{i=1}^{|I_1|} \psi^{(1)}_i(x) \left(\psi^{(1)}_i(y)\right)^{*} 
    + \sum_{j=1}^{|I_2|} \psi^{(2)}_j(x) \left(\psi^{(2)}_j(y)\right)^{*}.
\end{align}
Since the states $\{\bm{\psi}^{(1)}_i\}_{i=1, \dots, |I_1|}$ are obtained by taking a linear combination of $\{\bm{\mu}_z \}_{z \in I_1}$, $\psi^{(1)}_i(x)$ is zero when $x \in \Lambda_2$, and similarly $\psi^{(2)}_j(x) = 0$ when $x \in \Lambda_1$.
Therefore, we have $\left(\mathsf{P}_0\right)_{x, y} = 0$ when $x \in \Lambda_1$ and $y \in \Lambda_2$. This, together with the Hermiticity of $\mathsf{P}_0$, implies that the matrix $\left(\left(\mathsf{P}_0\right)_{x, y}\right)_{x, y \in \Lambda_0}$ is reducible. 
\par
We then show that reducibility of $\left(\left(\mathsf{P}_0\right)_{x, y}\right)_{x, y \in \Lambda_0}$ implies that $\{\bm{\mu}_{z}\}_{z \in I}$ is not connected.
Assume that $\Lambda_0$ can be decomposed as $\Lambda_0 = \Lambda_1 \cup \Lambda_2$ with $\Lambda_1 \cap \Lambda_2 = \emptyset$, $\Lambda_1 \neq \emptyset$ and $\Lambda_2 \neq \emptyset$, in such a way that $\left(\mathsf{P}_0\right)_{x,y} = 0$ for any $x \in \Lambda_1$ and $y \in \Lambda_2$.
Let $\mathsf{P}_1$ and $\mathsf{P}_2$ be $|\Lambda| \times |\Lambda|$ matrices defined as 
\begin{align}
    \left(\mathsf{P}_j\right) =
    \begin{cases}
        \left(\mathsf{P}_0\right)_{x, y} \ \ \text{if} \ \ x, y \in \Lambda_j, \\ 
        0 \ \ \text{otherwise},
    \end{cases}
\end{align}
for $j = 1, 2$.
Both $\mathsf{P}_1$ and $\mathsf{P}_2$ are also projection matrices, and we see that $\mathsf{P}_0 = \mathsf{P}_1 + \mathsf{P}_2$ and $\mathsf{P}_1 \mathsf{P}_2 = \mathsf{P}_2 \mathsf{P}_1 = 0$.
Let $\mathsf{I}$ be the $|\Lambda| \times |\Lambda|$ identity matrix.
Because $\mathsf{I} - \mathsf{P}_j \geq 0$, we can repeat the same argument in Lemma~\ref{lemma:z basis} for $\mathsf{I} - \mathsf{P}_j$ for $j = 1, 2$.
Then, for $j = 1$ and $2$, we can take a subset $I_j \subset \Lambda$ with $|I_j| = \dim \ker (\mathsf{I} - \mathsf{P}_j)$ and a basis $\{ \bm{\mu}^{(j)}_z\}_{z \in I_j}$ of $\ker(\mathsf{I} - \mathsf{P}_j)$ in such a way that $\mu^{(j)}_{z}(z) \neq 0$ and $\mu^{(j)}_z(z') = 0$ for any $z' \in I_j \backslash \{z\}$.
Since $\bm{\mu}^{(j)}_z$ is an element of $\ker(\mathsf{I} - \mathsf{P}_j)$, it holds that 
\begin{align}
    \mathsf{P}_j \bm{\mu}^{(j)}_z = \bm{\mu}^{(j)}_z. \label{eq:projection1 and 2}
\end{align}
We can also see that $\langle \bm{\mu}^{(1)}_z, \bm{\mu}^{(2)}_{z'} \rangle = 0$ because 
\begin{align}
    \langle \bm{\mu}^{(1)}_z, \bm{\mu}^{(2)}_{z'} \rangle 
    &= \langle \mathsf{P}_1 \bm{\mu}^{(1)}_z, \mathsf{P}_2 \bm{\mu}^{(2)}_{z'} \rangle \nonumber \\ 
    & = \langle \bm{\mu}^{(1)}_z, \mathsf{P}_1 \mathsf{P}_2 \bm{\mu}^{(2)}_{z'} \rangle \nonumber \\
    & = 0,
\end{align}
where we have used Eq.~\eqref{eq:projection1 and 2}, $\mathsf{P}_1^\dag = \mathsf{P}_1$ and $\mathsf{P}_1 \mathsf{P}_2 = 0$.
From the relation $\mathsf{P}_1 \mathsf{P}_2 = 0$ and Eq. (\ref{eq:projection1 and 2}), we find 
\begin{align}
    \mathsf{P}_0 \bm{\mu}^{(j)}_z = \mathsf{P}_0 \mathsf{P}_j \bm{\mu}^{(j)}_z = \mathsf{P}_j^2 \bm{\mu}^{(j)}_z = \bm{\mu}^{(j)}_z,
\end{align}
which means that $\bm{\mu}^{(j)}_z \in \mathfrak{h}_0$. 
We note that $\Tr \mathsf{P}_0 = \dim{\mathfrak{h}_0} = D_0$, and similarly, $\Tr\mathsf{P}_1 = |I_1|$, and $\Tr\mathsf{P}_2 = |I_2|$.
Since $\Tr \mathsf{P}_0 = \Tr \mathsf{P}_1 + \Tr \mathsf{P}_2$, we have 
\begin{align}
    D_0 = |I_1| + |I_2|.
\end{align}
Therefore, the set of vectors $\{\bm{\mu}^{(1)}_z\}_{z \in I_1} \cup \{\bm{\mu}^{(2)}_{z'}\}_{z' \in I_2}$ is a basis of $\mathfrak{h}_0$ characterized by Lemma~\ref{lemma:z basis}.
From Eq.~\eqref{eq:projection1 and 2}, we see that $\mu_z^{(j)}(x)$ can be nonzero only if $x \in \Lambda_j$ for $j = 1$ and $2$, which implies that 
\begin{align}
    \bm{\mu}^{(1)}_z \nsim \bm{\mu}^{(2)}_{z'} \ \text{for any} \ z \in I_1 \ \text{and} \ z' \in I_2.
\end{align}
Hence, the basis $\{\bm{\mu}^{(1)}\}_{z \in I_1} \cup \{\bm{\mu}^{(2)}_{z'}\}_{z' \in I_2}$ is not connected.
\end{proof}
By combining Lemmas~\ref{lemma:SUn ferromagnetism and connectivity} and~\ref{lemma:connectivity and irreducibility}, we obtain the desired result, Theorem~\ref{thm:SUn flat band ferro}.
\par 
\section{Rigorous results for the ferromagnetic SU(\texorpdfstring{$N$}{N}) Kondo lattice model  \label{sec:Kondo}}
As an application of Theorem~\ref{thm:SUn flat band ferro}, we shall establish a theorem about SU($N$) ferromagnetism of the SU($N$) Kondo lattice model with a flat band.
\subsection{The SU(\texorpdfstring{$N$}{N}) spin operators \label{subsec:SU(N) spin in the schwinger rep}}
We first introduce the standard Schwinger fermion representation of the SU($N$) group.
Let the $N \times N$ matrices $G^{A}$ ($A = 1, \dots, N^2 - 1$) be the generators of the SU($N$) group such that 
\begin{align}
    &\Tr\left(G^A G^B \right) = \delta_{A, B}, \label{eq:normalization of generator}\\
    &[G^A, G^B] = \sum_{C=1}^{N^2 - 1} i f^{A, B, C} G^{C}, \label{eq:structure constant}\\
    &\sum_{A=1}^{N^2 - 1} \left(G^{A}\right)_{\alpha, \beta} \left(G^{A}\right)_{\mu, \nu} = \delta_{\alpha, \nu} \delta_{\beta, \mu} - \frac{1}{N} \delta_{\alpha, \beta} \delta_{\mu, \nu}, \label{eq:identity of matrix elements of gemerator}
\end{align}
where $f^{A, B, C}$ are structure constants of the $\mathfrak{su}(N)$ Lie algebra.
Let $\hat{c}_{\alpha}^\dag$ and $\hat{c}_{\alpha}$ be fermion creation and annihilation operators, where $\alpha = 1, \dots, N$.
They satisfy $\{\hat{c}_{\alpha}, \hat{c}_{\beta}\} = \{\hat{c}_{\alpha}^\dag, \hat{c}_{\beta}^\dag\} = 0$ and $\{\hat{c}_{\alpha}, \hat{c}_{\beta}^\dag\} = \delta_{\alpha, \beta}$.
With these operators, one can obtain the standard Schwinger fermion representation as 
\begin{align}
    \hat{S}^A = \sum_{\alpha, \beta = 1}^{N} \hat{c}_{\alpha}^\dag \left(G^{A}\right)_{\alpha, \beta} \hat{c}_{\beta}. \label{eq:std schwinger fermionic rep}
\end{align}
We see that 
\begin{align}
    [\hat{S}^A, \hat{S}^B] = \sum_{C=1}^{N^2 - 1} i f^{A, B, C} \hat{S}^C, 
\end{align}
which means that the operators $\hat{S}^A$ form a representation of the $\mathfrak{su}(N)$ Lie algebra.
In the following, we call the operators of the form Eq.~\eqref{eq:std schwinger fermionic rep} the SU($N$) spin operators.
\subsection{The Hamiltonian of the SU(\texorpdfstring{$N$}{N}) Kondo lattice model \label{subsec:hamiltonian of SUN klm}}
Here, we define the Hamiltonian of the SU($N$) Kondo lattice model.
Let $\Lambda$ be a finite lattice.
The operators $\hat{c}_{x, \alpha}^\dag$ and $\hat{c}_{x, \alpha}$ are creation and annihilation operators for itinerant fermions, which were introduced in Section~\ref{sec:SU(N) hubbard model and main result}.
The operators $\hat{f}_{x, \alpha}^\dag$ and $\hat{f}_{x, \alpha}$ are creation and annihilation operators for localized fermions at site $x \in \Lambda$ with color $\alpha$.
They satisfy 
\begin{align}
    \{\hat{f}_{x, \alpha}, \hat{f}_{y, \beta}\} &= \{\hat{f}_{x, \alpha}^\dag, \hat{f}_{y, \beta}^\dag\} = 0, \label{eq:ac f and f}\\
    \{\hat{f}_{x, \alpha}, \hat{f}_{y, \beta}^\dag\} &= \delta_{\alpha, \beta} \delta_{x, y}, \label{eq:ac f and fdag}\\
    \{\hat{c}_{x, \alpha}, \hat{f}_{y, \beta}\} &= \{\hat{c}_{x, \alpha}^\dag, \hat{f}_{y, \beta}^\dag\} = \{\hat{c}_{x, \alpha}, \hat{f}_{y, \beta}^\dag\} = \{\hat{c}_{x, \alpha}^\dag, \hat{f}_{y, \beta}\} = 0. \label{eq:ac c and f}
\end{align}
The number operator of localized fermion at site $x$ with color $\alpha$ is denoted by $\hat{n}^{(f)}_{x, \alpha} = \hat{f}_{x, \alpha}^\dag \hat{f}_{x, \alpha}$.
In the following, the number operator of the itinerant fermions $\hat{n}_{x, \alpha}$ is denoted by $\hat{n}^{(c)}_{x, \alpha} = \hat{c}_{x, \alpha}^\dag \hat{c}_{x, \alpha}$.
We also write $\hat{n}^{(c)}_x = \sum_{\alpha = 1}^N \hat{n}^{(c)}_{x, \alpha}$.
We denote the Fock space of the operators $\hat{c}_{x, \alpha}$ and $\hat{f}_{x, \alpha}$ by $\mathcal{H}'(\Lambda)$.
At each $x \in \Lambda$, we define the SU($N$) spin operators for itinerant and localized fermions by using the representation of Eq.~\eqref{eq:std schwinger fermionic rep} as
\begin{align}
    \hat{s}_x^A &= \sum_{\alpha, \beta =1}^N \hat{c}_{x, \alpha}^\dag \left(G^A\right)_{\alpha, \beta} \hat{c}_{x, \beta}, \\
    \hat{S}_x^A &= \sum_{\alpha, \beta =1}^N \hat{f}_{x, \alpha}^\dag \left(G^A\right)_{\alpha, \beta} \hat{f}_{x, \beta},
\end{align}
respectively.
The total SU($N$) spin operator is given by $\hat{S}_{\mathrm{tot}}^A = \sum_{x \in \Lambda} \hat{S}_{\mathrm{tot}, x}^A$, where $\hat{S}_{\mathrm{tot}, x}^A = \hat{s}_x^A + \hat{S}_x^A$.
The Hamiltonian of the SU($N$) Kondo lattice model is defined by
\begin{align}
    \hat{H}_{\mathrm{KLM}} = \sum_{\alpha=1}^N \sum_{x, y \in \Lambda} t_{x, y} \hat{c}_{x, \alpha}^\dag \hat{c}_{y, \alpha} + J_{\mathrm{K}} \sum_{x \in \Lambda} \sum_{A=1}^{N^2 - 1} \hat{s}_x^{A} \hat{S}_x^{A} \label{eq:klm hamiltonian}.
\end{align}
In the following, we study the case where the SU($N$) spins of itinerant and localized fermions are ferromagnetically coupled.
That is, we assume that $J_{\mathrm{K}} < 0$.
We also assume that there is exactly one fermion described by $\hat{f}_{x, \alpha}^\dag$ ($\alpha = 1, \dots, N$) localized at each site $x \in \Lambda$, i.e., at each site, there is the SU($N$) spin in the fundamental representation corresponding to a single box $\mysinglebox$.
In what follows, we only consider the subspace $\mathcal{W}(\Lambda) = \{\ket{\Psi} \in \mathcal{H}'(\Lambda)| \hat{n}_x^{(f)} \ket{\Psi} = \ket{\Psi} \ \text{for all} \ x \in \Lambda\}$, where $\hat{n}^{(f)}_x = \sum_{\alpha = 1}^N \hat{n}^{(f)}_{x, \alpha}$.
\par
The SU($N$) Kondo lattice model has the SU($N$) symmetry, which can be seen from the following relations
\begin{align}
    [\hat{H}_{\mathrm{KLM}}, \hat{S}_{\mathrm{tot}}^A] = 0 \ \ \text{for all} \ \ A=1, \dots, N^2 - 1.
\end{align} 
The SU($N$) symmetry can also be described by using the color raising and lowering operators.
The color raising and lowering operators are given as 
\begin{align}
    \hat{F}_{\mathrm{tot}}^{\alpha, \beta} = \sum_{x \in \Lambda} \left(\hat{c}_{x, \alpha}^\dag \hat{c}_{x, \beta} + \hat{f}_{x, \alpha}^\dag \hat{f}_{x, \beta}\right)\ \ \text{for} \ \ \alpha \neq \beta,
\end{align}
and the total number of itinerant and localized fermions with color $\alpha$ is given as 
\begin{align}
    \hat{F}_{\mathrm{tot}}^{\alpha, \alpha} = \sum_{x \in \Lambda} \left(\hat{n}_{x, \alpha}^{(c)} + \hat{n}^{(f)}_{x, \alpha}\right).
\end{align}
The SU($N$) symmetry of the model implies that 
\begin{align}
    [\hat{H}_{\mathrm{KLM}}, \hat{F}_{\mathrm{tot}}^{\alpha, \beta}] = 0.
\end{align}
The total number of itinerant and localized fermions is given by $\hat{N}_{\mathrm{tot}} = \sum_{\alpha = 1}^N \hat{F}^{\alpha, \alpha}$, and the Hamiltonian~\eqref{eq:klm hamiltonian} also commutes with $\hat{N}_{\mathrm{tot}}$.
We note that the Hamiltonian~\eqref{eq:klm hamiltonian} also satisfies 
\begin{align}
    [\hat{H}_{\mathrm{KLM}}, \hat{N}_c] = 0,
\end{align}
which means the model conserves the number of itinerant fermions described by $\hat{c}_{x, \alpha}^\dag$.
Therefore, the numbers of itinerant fermions and the number of localized fermions are independently conserved.
In the following, we denote the eigenvalue of $\hat{N}_{\mathrm{tot}}$, $\hat{N}_c$, and $\hat{F}_{\mathrm{tot}}^{\alpha, \alpha}$ by $N_{\mathrm{tot}}, N_c$, and $L_{\alpha}$, respectively.
Since it holds that $\hat{n}^{(f)}_x \ket{\Psi} = \ket{\Psi}$ in the subspace $\mathcal{W}(\Lambda)$, we have $\sum_{x \in \Lambda} \hat{n}^{(f)}_{x} \ket{\Psi} = |\Lambda| \ket{\Psi}$.
When the number of itinerant fermions is fixed to $N_c$, then $N_{\mathrm{tot}} = N_c + |\Lambda|$.
As in Section~\ref{sec:SU(N) hubbard model and main result}, we define the subspaces $\mathcal{H}'_{N_{\mathrm{tot}}}(\Lambda)$ and $\mathcal{H}'_{L_1, \dots, L_N}(\Lambda)$ by 
\begin{align}
    \mathcal{H}'_{N_{\mathrm{tot}}}(\Lambda) &= \{\ket{\Psi} \in \mathcal{H}'(\Lambda)~|~ \hat{N}_{\mathrm{tot}} \ket{\Psi} = N_{\mathrm{tot}} \ket{\Psi}\}, \\ 
    \mathcal{H}'_{L_1, \dots, L_N}(\Lambda) &= \{\ket{\Psi} \in \mathcal{H}'(\Lambda) ~|~ \hat{F}^{\alpha, \alpha}_{\mathrm{tot}} \ket{\Psi} = L_{\alpha} \ket{\Psi} \ \text{for all} \ \alpha = 1, \dots, N\}.
\end{align}
\par
To define the SU($N$) ferromagnetism, we again define the Casimir operator $\hat{C}_{ \mathrm{tot}, 2}$ of the SU($N$) group as 
\begin{align}
    \hat{C}_{\mathrm{tot}, 2} = \frac{1}{2} \left(\sum_{\alpha, \beta = 1}^N \hat{F}_{\mathrm{tot}}^{\alpha, \beta} \hat{F}_{\mathrm{tot}}^{\beta, \alpha} - \frac{\hat{N}_{\mathrm{tot}}^2}{N}\right).
\end{align}
Here we define the SU($N$) ferromagnetism of the SU($N$) Kondo lattice model as follows.
\begin{definition}
    Consider the Hamiltonian~\eqref{eq:klm hamiltonian} with a fixed $N_{\mathrm{tot}}$.
    We say that the model exhibits SU($N$) ferromagnetism if any ground state $\ket{\Psi_{\mathrm{GS}}}$ has the maximum eigenvalue of $\hat{C}_{\mathrm{tot}, 2}$ in $\mathcal{H}'_{N_{\mathrm{tot}}}(\Lambda)$ ,i.e.,
    \begin{align}
        \hat{C}_{\mathrm{tot}, 2} \ket{\Psi_{\mathrm{GS}}} = \frac{N_{\mathrm{tot}}(N-1)}{2} \left(\frac{N_{\mathrm{tot}}}{N} + 1\right) \ket{\Psi_{\mathrm{GS}}}.
    \end{align}
\end{definition}
\subsection{Theorem for the SU(\texorpdfstring{$N$}{N}) Kondo lattice model with a flat band \label{subsec: theorem for the SU(N) klm}}
In the following, we again assume that $\mathsf{T} \geq 0$ and $\mathfrak{h}_0 = \ker{\mathsf{T}}$ is not empty.
The dimension of $\mathfrak{h}_0$ is denoted by $D_0 > 0$.
With the basis introduced in Lemma~\ref{lemma:z basis}, we can define the set of operators $\hat{a}_{z, \alpha}^\dag$ as in Eq.~\eqref{eq:a operator}.
We then define the fully polarized states with the same color $\alpha$ for itinerant and localized fermions as 
\begin{align}
    \ket{\Psi_{\mathrm{all} \ \alpha}} = \left(\prod_{z \in I} \hat{a}_{z, \alpha}^\dag\right) \left(\prod_{x \in \Lambda} \hat{f}_{x, \alpha}^\dag\right) \ket{\Psi_{\mathrm{vac}}}, \label{eq:all aplha klm}
\end{align}
where $\ket{\Psi_{\mathrm{vac}}}$ is the normalized vacuum state for $\hat{c}_{x, \alpha}$ and $\hat{f}_{x, \alpha}$ operators.
We also define the state of the form,
\begin{align}
    \ket{\Psi_{L_1, \dots, L_N}} = \left(\hat{F}_{\mathrm{tot}}^{N, 1}\right)^{L_N} \cdots \left(\hat{F}_{\mathrm{tot}}^{2, 1}\right)^{L_2} \ket{\Psi_{\mathrm{all} \ 1}}, \label{eq:fully polarized state klm}
\end{align}
where $L_1 = D_0 +|\Lambda| - \sum_{\alpha=2}^{N} L_{\alpha}$.
We also call the states of the form~\eqref{eq:fully polarized state klm} fully polarized states.
In the same manner as Eq.~\eqref{eq:phi all alpha has maximum C2}, it can be checked that 
\begin{align}
    \hat{C}_{\mathrm{tot}, 2} \ket{\Psi_{\mathrm{all} \ \alpha}} = \frac{N_{\mathrm{tot}}(N-1)}{2} \left(\frac{N_{\mathrm{tot}}}{N} + 1\right) \ket{\Psi_{\mathrm{all} \ \alpha}},
\end{align}
where $N_{\mathrm{tot}} = D_0 + |\Lambda|$.
We can see that the fully polarized states of the form~\eqref{eq:fully polarized state klm} also have the same eigenvalue of $\hat{C}_{\mathrm{tot, 2}}$, and thus all the fully polarized states have the maximum eigenvalue of $\hat{C}_{\mathrm{tot}, 2}$.

Using Theorem~\ref{thm:SUn flat band ferro}, we can prove the following theorem:
\begin{theorem} \label{thm:SUn flat band ferro klm}
    Consider the Hamiltonian~\eqref{eq:klm hamiltonian} with $N_{\mathrm{tot}} = D_0 + |\Lambda|$ and $J_{\mathrm{K}} < 0$. 
    The ferromagnetic SU($N$) Kondo lattice model exhibits SU($N$) ferromagnetism when the subset $\Lambda_0$ introduced in Theorem~\ref{thm:SUn flat band ferro} satisfies $\Lambda_0 = \Lambda$, and the matrix $\left(\left(\mathsf{P}_0\right)_{x, y}\right)_{x, y \in \Lambda_0}$ is irreducible.
\end{theorem}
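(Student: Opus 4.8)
The plan is to recast $\hat{H}_{\mathrm{KLM}}$, restricted to the physical subspace $\mathcal{W}(\Lambda)$, as a sum of a nonnegative hopping part and a nonnegative ``Kondo penalty'' part, in complete parallel with the decomposition $\hat{H}_{\mathrm{Hub}}=\hat{H}_{\mathrm{hop}}+\hat{H}_{\mathrm{int}}$ used in the proof of Theorem~\ref{thm:SUn flat band ferro}. First I would rewrite the coupling using the completeness relation \eqref{eq:identity of matrix elements of gemerator} and $\hat{n}^{(f)}_x=1$: one gets $\sum_A\hat{s}^A_x\hat{S}^A_x=\hat{G}_x-\tfrac1N\hat{n}^{(c)}_x$ with $\hat{G}_x:=\sum_{\alpha,\beta}\hat{c}^\dag_{x,\alpha}\hat{c}_{x,\beta}\hat{f}^\dag_{x,\beta}\hat{f}_{x,\alpha}$, and, since $\hat{N}_c=D_0$, the total Kondo term is $J_{\mathrm{K}}(\sum_x\hat{G}_x-\tfrac{D_0}{N})$. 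The key algebraic fact, a short direct computation with the anticommutation relations \eqref{eq:ac f and f}--\eqref{eq:ac c and f}, is the operator identity $\hat{n}^{(c)}_x-\hat{G}_x=\tfrac12\sum_{\alpha,\beta}\hat{B}^\dag_{x,\alpha\beta}\hat{B}_{x,\alpha\beta}$ on $\mathcal{W}(\Lambda)$, where $\hat{B}_{x,\alpha\beta}:=\hat{c}_{x,\alpha}\hat{f}_{x,\beta}-\hat{c}_{x,\beta}\hat{f}_{x,\alpha}$ annihilates an antisymmetric itinerant--localized pair at $x$. Since $J_{\mathrm{K}}<0$, this yields
$$\hat{H}_{\mathrm{KLM}}-\frac{(N-1)D_0}{N}J_{\mathrm{K}}=\hat{H}_{\mathrm{hop}}+\frac{|J_{\mathrm{K}}|}{2}\sum_{x\in\Lambda}\sum_{\alpha,\beta}\hat{B}^\dag_{x,\alpha\beta}\hat{B}_{x,\alpha\beta}\ \ge\ 0,$$
using $\hat{H}_{\mathrm{hop}}\ge 0$ from the proof of Lemma~\ref{lemma:SUn ferromagnetism and connectivity}. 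The fully polarized states saturate the bound, because $\hat{H}_{\mathrm{hop}}\ket{\Psi_{\mathrm{all}\ \alpha}}=0$ (the itinerant part lies in $\ker\mathsf{T}$) and $\hat{B}_{x,\alpha\beta}\ket{\Psi_{\mathrm{all}\ \alpha}}=0$ (all fermions carry one color, so no antisymmetric pair exists), and then the states \eqref{eq:fully polarized state klm} do so too by SU($N$) invariance. This already pins down the ground-state energy and shows that every fully polarized state is a ground state, with no use of the irreducibility hypothesis.

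For the converse, any ground state $\ket{\Psi_{\mathrm{GS}}}$ must lie in $\ker\hat{H}_{\mathrm{hop}}\cap\bigcap_{x,\alpha<\beta}\ker\hat{B}_{x,\alpha\beta}$. From $\ker\hat{H}_{\mathrm{hop}}$ and the representation $\hat{H}_{\mathrm{hop}}=\sum_{i,\alpha}\lambda_i\hat{b}^\dag_{i,\alpha}\hat{b}_{i,\alpha}$, $\ket{\Psi_{\mathrm{GS}}}$ is a polynomial in the flat-band operators $\hat{a}^\dag_{z,\alpha}$ ($z\in I$) and the localized operators $\hat{f}^\dag_{x,\alpha}$ applied to $\ket{\Psi_{\mathrm{vac}}}$. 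The conditions $\hat{B}_{x,\alpha\beta}\ket{\Psi_{\mathrm{GS}}}=0$ then play exactly the role that $\hat{H}_{\mathrm{int}}\ket{\Phi_{\mathrm{GS}}}=0$ played in Lemma~\ref{lemma:SUn ferromagnetism and connectivity}: evaluated at a site $z\in I$ they force, on each flat-band configuration, the color attached to $\hat{a}^\dag_{z,\cdot}$ and the color of the localized fermion at $z$ to be symmetrized; evaluated at a site $x\in\Lambda_0\setminus I$ with $\mu_z(x)\neq0$ they equate the expansion coefficients of configurations differing by swapping two colors carried by modes $\bm{\mu}_{z_1}\sim\bm{\mu}_{z_2}$, or by swapping the localized color at $x$ with the color of a mode $\bm{\mu}_z$ with $\mu_z(x)\neq0$. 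By Theorem~\ref{thm:SUn flat band ferro} (concretely Lemma~\ref{lemma:connectivity and irreducibility}), irreducibility of $\big((\mathsf{P}_0)_{x,y}\big)_{x,y\in\Lambda_0}$ is the connectivity of $\{\bm{\mu}_z\}_{z\in I}$, so these relations propagate through the whole itinerant network, and the extra hypothesis $\Lambda_0=\Lambda$ attaches every localized spin to some mode $\bm{\mu}_z$, drawing the localized colors into the same connected web. Hence all expansion coefficients related by a permutation of colors coincide, so $\ket{\Psi_{\mathrm{GS}}}$ is totally symmetric in the color degrees of freedom: it is a fully polarized state and is unique in each sector $\mathcal{H}'_{L_1,\dots,L_N}(\Lambda)$. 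Any ground state therefore attains the maximum eigenvalue of $\hat{C}_{\mathrm{tot},2}$, which is SU($N$) ferromagnetism.

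The hard part will be the combinatorial step in the second paragraph, which is genuinely more delicate than its counterpart in Lemma~\ref{lemma:SUn ferromagnetism and connectivity} for two reasons. First, $\hat{H}_{\mathrm{KLM}}$ contains no on-site repulsion among the itinerant fermions, so configurations with two itinerant fermions on the same flat-band mode (or more general over-antisymmetrized local configurations) are not excluded at the outset; one must show, using only the constraints $\hat{B}_{x,\alpha\beta}\ket{\Psi_{\mathrm{GS}}}=0$ together with connectivity, that such configurations either cannot occur or cannot enlarge the ground space. Second, the equality of coefficients must be transported not just along the itinerant bonds $\bm{\mu}_z\sim\bm{\mu}_{z'}$ but also across the localized spins, and this is precisely where $\Lambda_0=\Lambda$ is indispensable: a site $x_0\notin\Lambda_0$ would carry a localized SU($N$) spin that decouples entirely in the ground-state sector, yielding extra ground states that are not fully polarized, so the assumption cannot be weakened. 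Everything else --- the operator identities, the positivity estimate, and the evaluation of $\hat{C}_{\mathrm{tot},2}$ on the fully polarized states --- is routine.
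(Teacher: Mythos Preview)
Your lower bound is correct and arguably cleaner than the paper's: your operator identity $\hat{n}^{(c)}_x-\hat{G}_x=\tfrac12\sum_{\alpha,\beta}\hat{B}^\dag_{x,\alpha\beta}\hat{B}_{x,\alpha\beta}$ on $\mathcal{W}(\Lambda)$ is exactly right, and it yields the sharp lower bound and its saturation by the fully polarized states without any auxiliary parameter. The paper takes a different route: it inserts a fictitious on-site repulsion $U>0$ and rewrites $\hat{H}_{\mathrm{KLM}}=\hat{H}_{\mathrm{Hub}}+J_{\mathrm{K}}\big(1-\tfrac1N\big)\hat{N}_c+\sum_x\hat{V}_x$, then proves a separate lemma that $\hat{V}_x\ge 0$ once $|J_{\mathrm{K}}|/U>N/2$, by computing all local Casimir eigenvalues.

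Where the two approaches really diverge is the uniqueness step, and here the paper's trick pays off. From $\hat{H}_{\mathrm{Hub}}\ket{\Psi_{\mathrm{GS}}}=0$ the paper can invoke Theorem~\ref{thm:SUn flat band ferro} \emph{verbatim}: the itinerant part of any ground state is already of the form $\sum_{\bm{\alpha}}C(\bm{\alpha},\bm{\beta})\prod_{z\in I}\hat{a}^\dag_{z,\alpha_z}$ with $C$ symmetric in $\bm{\alpha}$; in particular, double occupancy of flat-band modes is eliminated before the Kondo constraints are even touched. The conditions $\hat{B}_{x,\alpha\beta}\ket{\Psi_{\mathrm{GS}}}=0$ then serve only to tie each localized color $\beta_x$ into the already-symmetrized itinerant web, which is a short computation using $\Lambda_0=\Lambda$. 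By contrast, your route must extract \emph{both} the no-double-occupancy statement and the color-symmetrization from the $\hat{B}$ constraints alone. You flag this honestly as the hard part; it is in fact doable (for $z\in I$ one has $\{\hat{c}_{z,\alpha},\hat{a}^\dag_{z',\beta}\}=\delta_{z,z'}\delta_{\alpha,\beta}\mu_z(z)$, so the image of $\hat{B}_{z,\alpha\beta}$ on a doubly-occupied-at-$z$ monomial lands in a sector with a surviving $\hat{a}^\dag_{z,\cdot}$ and no localized fermion at $z$, which cannot be cancelled by singly-occupied monomials), but it requires a careful case analysis that the paper's decomposition avoids entirely by recycling Theorem~\ref{thm:SUn flat band ferro}. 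In short: your decomposition is more elegant for the variational inequality, while the paper's is more economical for uniqueness because it reduces the Kondo problem to the Hubbard problem already solved.
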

Before we proceed with the proof of Theorem~\ref{thm:SUn flat band ferro klm}, let us remark that the ferromagnetic SU($N$) Kondo lattice model on the delta chain shown in Fig.~\ref{fig:delta chain} exhibits SU($N$) ferromagnetism.
As mentioned earlier, in the delta chain, $\Lambda_0 = \Lambda$ holds true.
Since the irreducibility of $\mathsf{P}_0$ is also proven, according to Theorem~\ref{thm:SUn flat band ferro klm}, this is an example of the ground state being SU($N$) ferromagnetic.

\subsection{Proof of Theorem~\ref{thm:SUn flat band ferro klm} \label{subsec:proof of thm2}}
We decompose the Hamiltonian~\eqref{eq:klm hamiltonian} as
\begin{align}
    \hat{H}_ {\mathrm{KLM}} &= \hat{H}_{\mathrm{Hub}} + J_{\mathrm{K}} \left(1 - \frac{1}{N}\right) \hat{N}_c + \sum_{x \in \Lambda} \hat{V}_x, \label{eq:decomposition of klm}
\end{align}
where $\hat{H}_{\mathrm{Hub}}$ is the Hamiltonian of the SU($N$) Hubbard model, 
\begin{align}
    \hat{H}_{\mathrm{Hub}} &= \sum_{\alpha=1}^{N} \sum_{x, y \in \Lambda} t_{x, y} \hat{c}_{x, \alpha}^\dag \hat{c}_{y, \alpha} + U \sum_{\alpha < \beta} \sum_{x \in \Lambda} \hat{n}^{(c)}_{x, \alpha} \hat{n}^{(c)}_{x, \beta}, 
\end{align}
with $U > 0$.
The local interaction $\hat{V}_x$ is defined by
\begin{align}
    \hat{V}_x &= -U \sum_{\alpha < \beta} \hat{n}^{(c)}_{x, \alpha} \hat{n}^{(c)}_{x, \beta}
    - J_{\mathrm{K}} \left(1 - \frac{1}{N}\right) \hat{n}^{(c)}_x + J_{\mathrm{K}} \sum_{A=1}^{N^2 - 1} \hat{s}_x^{A} \hat{S}_x^{A}.
\end{align}
In the proof, we use the following lemma.
\begin{lemma} \label{lemma: positive demidefinitenss of Vx}
    The local interaction $\hat{V}_x$ is positive semidefinite when $|J_{\mathrm{K}}|/U > N/2$.
\end{lemma}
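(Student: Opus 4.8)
The plan is to bring $\hat V_x$ to a normal form in which everything positive is isolated from a single negative piece that is a scalar on each sector of fixed on--site itinerant occupation, and then to bound that scalar.

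First I would compute the Kondo term in closed fermionic form. Using the completeness relation~\eqref{eq:identity of matrix elements of gemerator} one gets $\sum_{A}\hat s^{A}_{x}\hat S^{A}_{x}=\sum_{\alpha,\beta}\hat c^{\dag}_{x,\alpha}\hat c_{x,\beta}\hat f^{\dag}_{x,\beta}\hat f_{x,\alpha}-\tfrac{1}{N}\hat n^{(c)}_{x}\hat n^{(f)}_{x}$. Introducing the antisymmetrized $c$--$f$ pair operators $\hat A_{x,\alpha\beta}=\hat f_{x,\beta}\hat c_{x,\alpha}-\hat f_{x,\alpha}\hat c_{x,\beta}$ and expanding $\sum_{\alpha<\beta}\hat A^{\dag}_{x,\alpha\beta}\hat A_{x,\alpha\beta}$ with the anticommutation relations~\eqref{eq:ac f and fdag}--\eqref{eq:ac c and f}, a short manipulation shows that on $\mathcal W(\Lambda)$ (where $\hat n^{(f)}_{x}=1$)
\begin{align}
\sum_{\alpha<\beta}\hat A^{\dag}_{x,\alpha\beta}\hat A_{x,\alpha\beta}=\Big(1-\tfrac{1}{N}\Big)\hat n^{(c)}_{x}-\sum_{A}\hat s^{A}_{x}\hat S^{A}_{x}.
\end{align}
Combined with $\sum_{\alpha<\beta}\hat n^{(c)}_{x,\alpha}\hat n^{(c)}_{x,\beta}=\tfrac12\hat n^{(c)}_{x}(\hat n^{(c)}_{x}-1)$ and $J_{\mathrm{K}}=-|J_{\mathrm{K}}|$, the definition of $\hat V_{x}$ collapses on $\mathcal W(\Lambda)$ to
\begin{align}
\hat V_{x}=|J_{\mathrm{K}}|\sum_{\alpha<\beta}\hat A^{\dag}_{x,\alpha\beta}\hat A_{x,\alpha\beta}-\frac{U}{2}\,\hat n^{(c)}_{x}\bigl(\hat n^{(c)}_{x}-1\bigr).
\end{align}
Both summands commute with $\hat n^{(c)}_{x}$ and preserve $\mathcal W(\Lambda)$ (note that $\hat A_{x,\alpha\beta}$ maps out of $\mathcal W(\Lambda)$, but $\hat A^{\dag}_{x,\alpha\beta}\hat A_{x,\alpha\beta}$ does not and is positive semidefinite there), so it suffices to prove positivity on the sector with a fixed number $\hat n^{(c)}_{x}=n$ of on--site itinerant fermions, $0\le n\le N$, where the second term is the scalar $-\tfrac{U}{2}n(n-1)$.

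The heart of the argument is then the operator bound $\sum_{\alpha<\beta}\hat A^{\dag}_{x,\alpha\beta}\hat A_{x,\alpha\beta}\ge n-1$ on the $n$--itinerant sector. Restricted to $\mathcal W(\Lambda)$ and to this sector, the on--site Hilbert space carries the SU($N$) representation $\wedge^{n}\mathbb{C}^{N}\otimes\mathbb{C}^{N}$ (itinerant $\otimes$ localized), which by Pieri's rule decomposes multiplicity--freely as $\wedge^{n+1}\mathbb{C}^{N}\oplus R_{n}$ with $R_{n}$ the hook $(2,1^{n-1})$ (for $n=N$ only $R_{N}$ occurs). Since $\sum_{A}\hat s^{A}_{x}\hat S^{A}_{x}$ commutes with $\hat s^{A}_{x}+\hat S^{A}_{x}$, it is a scalar on each irreducible summand; evaluating the quadratic Casimirs --- equivalently, reading off the content $(\text{column})-(\text{row})\in\{1,-n\}$ of the box added to the column of $n$ boxes --- one finds it equals $1-\tfrac{n}{N}$ on $R_{n}$ and $-\tfrac{n(N+1)}{N}$ on $\wedge^{n+1}\mathbb{C}^{N}$. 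Substituting into the identity above, $\sum_{\alpha<\beta}\hat A^{\dag}_{x,\alpha\beta}\hat A_{x,\alpha\beta}$ equals $n-1$ on $R_{n}$ and $2n$ on $\wedge^{n+1}\mathbb{C}^{N}$, hence is $\ge n-1$.

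Putting the pieces together, on the $n$--itinerant sector $\hat V_{x}\ge(n-1)|J_{\mathrm{K}}|-\tfrac{U}{2}n(n-1)=(n-1)\bigl(|J_{\mathrm{K}}|-\tfrac{Un}{2}\bigr)$, which is nonnegative for every $0\le n\le N$ precisely when $|J_{\mathrm{K}}|\ge\tfrac{U}{2}N$; the hypothesis $|J_{\mathrm{K}}|/U>N/2$ therefore gives $\hat V_{x}\ge0$. I expect the representation--theoretic step to be the main obstacle: one has to recognize the correct tensor--product decomposition of the on--site space and pin down the two exchange eigenvalues, after which the threshold $N/2$ (forced by the binding sector $n=N$, since $\binom{n}{2}/(n-1)=n/2$) emerges automatically. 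The remaining work --- the anticommutator bookkeeping in the first step and the care needed because $\hat A_{x,\alpha\beta}$ does not preserve $\mathcal W(\Lambda)$ --- is routine.
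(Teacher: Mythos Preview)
Your proof is correct and follows essentially the same route as the paper: both fix the on--site itinerant occupation $n$, decompose $\wedge^{n}\mathbb{C}^{N}\otimes\mathbb{C}^{N}$ into the two irreducibles via Pieri's rule, and evaluate the exchange term through the quadratic Casimirs to obtain the eigenvalues of $\hat V_{x}$. Your preliminary rewriting $\hat V_{x}=|J_{\mathrm K}|\sum_{\alpha<\beta}\hat A^{\dag}_{x,\alpha\beta}\hat A_{x,\alpha\beta}-\tfrac{U}{2}\hat n^{(c)}_{x}(\hat n^{(c)}_{x}-1)$ is in fact the very identity the paper derives (and uses) \emph{after} the lemma, when analyzing the ground--state condition~\eqref{eq:klm vx condition}; the paper's proof of the lemma itself simply lists the four eigenvalues~\eqref{eq:vx, nc=0}--\eqref{eq:vx, nc=N} directly, without isolating the manifestly positive piece. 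So the organization differs slightly, but the representation--theoretic content is identical.

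One small slip to fix: your final bound $(n-1)\bigl(|J_{\mathrm K}|-\tfrac{Un}{2}\bigr)$ is \emph{not} nonnegative at $n=0$ --- it equals $-|J_{\mathrm K}|$. The $n=0$ sector must be handled separately (trivially, since both summands of $\hat V_{x}$ vanish there; equivalently, at $n=0$ only $\wedge^{1}$ occurs and the eigenvalue of $\sum_{\alpha<\beta}\hat A^{\dag}\hat A$ is $2n=0$, not $n-1=-1$). The paper does treat $n^{(c)}=0$ as a separate case.
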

\begin{proof}
To prove this, we study the eigenvalues of $\hat{V}_x$, which is denoted by $V_x$ in the following.
We can express $\hat{V}_x$ as
\begin{align}
    \hat{V}_x &= 
    -\frac{U}{2} \hat{n}^{(c)}_x (\hat{n}^{(c)}_x - 1) 
    - J_{\mathrm{K}} \left(1 -\frac{1}{N}\right) \hat{n}^{(c)}_x \nonumber \\ 
    &+ J_{\mathrm{K}}\left(\frac{1}{2} \sum_{A = 1}^{N^2 - 1} \hat{S}_{\mathrm{tot}, x}^A \hat{S}_{\mathrm{tot}, x}^A - \frac{1}{2} \sum_{A=1}^{N^2 - 1} \hat{s}_x^A \hat{s}_x^A - \frac{1}{2} \sum_{A=1}^{N^2 - 1} \hat{S}_x^A \hat{S}_x^A 
    \right).
\end{align}
Since we consider the subspace $\mathcal{W}(\Lambda)$, the operator $\frac{1}{2} \sum_{A=1}^{N^2 - 1} \hat{S}_x^A \hat{S}_x^A $ is the quadratic Casimir operator for the fundamental representation, and its eigenvalue is 
\begin{align}
    C_2\left( \!
    \mysinglebox
    \!\right)
    = \frac{1}{2N}(N^2 - 1) \label{eq:casimir one box}.
\end{align}
Because $\hat{n}^{(c)}_x$ commutes with $\hat{V}_x$, we can fix the number of itinerant fermions, which is denoted by $n^{(c)}$ ($n^{(c)} = 0, \dots, N$).
In this sector, the operator $\frac{1}{2} \sum_{A=1}^{N^2 - 1} \hat{s}_x^A \hat{s}_x^A$ is the quadratic Casimir operator for the representation 
\begin{align}
    \myfullyantisymmetrizedbox{$n^{(c)}$},
\end{align}
and its eigenvalue is given by 
\begin{align}
    C_2\left(\myfullyantisymmetrizedbox{$n^{(c)}$}\right) = \frac{N+1}{2N} n^{(c)} (N - n^{(c)}).
\end{align}
Therefore, when $n^{(c)}$ is fixed, the local interaction $\hat{V}_x$ acts as 
\begin{align}
    \hat{V}_x &= -\frac{U}{2} n^{(c)} (n^{(c)} - 1) 
    - J_{\mathrm{K}} \left(1 - \frac{1}{N}\right) n^{(c)} \nonumber \\ 
    &+ J_{\mathrm{K}} \left(
    \frac{1}{2} \sum_{A = 1}^{N^2 - 1} \hat{S}_{\mathrm{tot}, x}^A \hat{S}_{\mathrm{tot}, x}^A 
    - \frac{N+1}{2N} n^{(c)} (N - n^{(c)}) 
    - \frac{1}{2N}(N^2 - 1) 
    \right).
\end{align}
When the number of itinerant fermions is $n^{(c)}$, there are two possible total SU($N$) spins in the representations 
\begin{align}
    \myfullyantisymmetrizedbox{$n^{(c)} + 1$}, 
    \myantisymmetrizedandsimmetrizedbox{$n^{(c)}$},
\end{align}
and the eigenvalues of the quadratic Casimir operator for these representations are 
\begin{align}
    C_2\left(\!
    \myfullyantisymmetrizedbox{$n^{(c)} + 1$}
    \!\right) = \frac{N+1}{2N} (n^{(c)} + 1) (N - n^{(c)} - 1),
\end{align}
and
\begin{align}
    C_2\left(\!
    \myantisymmetrizedandsimmetrizedbox{$n^{(c)}$}
    \!\right) = n^{(c)} + 1 + \frac{N+1}{2N} (n^{(c)} + 1) (N - n^{(c)} - 1).
\end{align}
When $n^{(c)} = 0$, there is only one possible total SU($N$) spin described by $\mysinglebox$.
When $n^{(c)} = N$, the itinerant fermions form the SU($N$) singlet, in which the SU($N$) spin transforms in the trivial representation corresponding to the Young diagram with no box.
Thus, in this case, there is also one possible total SU($N$) spin in the representation $\mysinglebox$.
The eigenvalue of the quadratic Casimir operator for the representation $\mysinglebox$ is given by Eq.~\eqref{eq:casimir one box}.
In this way, we find 
\begin{align}
    V_x &= 0 \ \ \text{for} \ n^{(c)} = 0, \label{eq:vx, nc=0} \\
    V_x &= - \frac{U}{2} n^{(c)}(n^{(c)} - 1) - 2 J_{\mathrm{K}} n^{(c)}  \ \ \text{for} \ \myfullyantisymmetrizedbox{$n^{(c)} + 1$} \ (1 \leq n^{(c)} \leq N-1), \label{eq:vx, nc, antiferro}\\ 
    V_x &= - \frac{U}{2} n^{(c)}(n^{(c)} - 1) - J_{\mathrm{K}} (n^{(c)} - 1) \ \ \text{for} \ \myantisymmetrizedandsimmetrizedbox{$n^{(c)}$} \ \ (1 \leq n_c \leq N-1), \label{eq:vx, nc, ferro}\\ 
    V_x &= -\frac{U}{2}N(N-1) - J_{\mathrm{K}} (N-1)\ \ \text{for} \ n^{(c)} = N. \label{eq:vx, nc=N}
\end{align}
When $n^{(c)} = 0$, we see that $V_x = 0$.
For $n^{(c)} = 1, \dots, N - 1$, when $J_{\mathrm{K}} < 0$,  the eigenvalue of Eq.~\eqref{eq:vx, nc, antiferro} is strictly larger than that of Eq.~\eqref{eq:vx, nc, ferro}.
One can easily see the eigenvalues~\eqref{eq:vx, nc, ferro} cannot be negative when $|J_{\mathrm{K}}|/U > N/2$.
For $n^{(c)} = N$, the eigenvalue~\eqref{eq:vx, nc=N} is positive when $|J_{\mathrm{K}}|/U > N/2$.
Thus, all the eigenvalues of $\hat{V}_x$ are greater than or equal to zero, which implies that $\hat{V}_x \geq 0$.
\end{proof}
\par
We note that in the decomposition~Eq.~\eqref{eq:decomposition of klm}, one can take the parameter $U > 0$ arbitrarily.
For a given $J_{\mathrm{K}}$, there always exists $U$ such that $|J_{\mathrm{K}}|/U > N/2$.
In the following proof, $U$ is assumed to be such that $|J_{\mathrm{K}}|/U > N/2$, and hence $\hat{V}_x \geq 0$.
\par
Here we show that the fully polarized states~\eqref{eq:all aplha klm} and~\eqref{eq:fully polarized state klm} are ground states.
Since $\hat{H}_{\mathrm{Hub}} \geq 0$ and $\hat{V}_x \geq 0$, we get a lower bound
\begin{align}
    \hat{H}_{\mathrm{KLM}} \geq - D_0 |J_{\mathrm{K}}| \left(1 - \frac{1}{N}\right) , \label{eq:inequality of klm}
\end{align}
where we have used $N_c = D_0$ because we consider the Hilbert space $\mathcal{H}'_{D_0 + |\Lambda|}(\Lambda)$.
We can easily check that the fully polarized state $\ket{\Psi_{\mathrm{all}\ 1}}$ satisfies 
\begin{align}
    \hat{H}_{\mathrm{KLM}}\ket{\Psi_{\mathrm{all}\ 1}} = -D_0|J_{\mathrm{K}}| \left(1 - \frac{1}{N}\right) \ket{\Psi_{\mathrm{all}\ 1}},
\end{align}
and hence the state $\ket{\Psi_{\mathrm{all}\ 1}}$ is a ground state of $\hat{H}_{\mathrm{KLM}}$.
Due to the SU($N$) symmetry, all the fully polarized states are ground states of $\hat{H}_{\mathrm{KLM}}$.
\par 
In the rest of the proof, we prove that there are no other ground states.
Let $\ket{\Psi_{\mathrm{GS}}}$ be an arbitrary ground state, which satisfies 
\begin{align}
    \hat{H}_{\mathrm{KLM}} \ket{\Psi_{\mathrm{GS}}} = -D_0|J_{\mathrm{K}}| \left(1 - \frac{1}{N}\right) \ket{\Psi_{\mathrm{GS}}}.
\end{align}
With the decomposition~\eqref{eq:decomposition of klm}, noting that $\hat{H}_{\mathrm{Hub}} \geq 0$ and $\hat{V}_x \geq 0$, we find 
\begin{align}
    \hat{H}_{\mathrm{Hub}} \ket{\Psi_{\mathrm{GS}}} = 0, \label{eq:klm hubbard condition}
\end{align}
and 
\begin{align}
    \hat{V}_x \ket{\Psi_{\mathrm{GS}}} = 0 \ \ \text{for all} \ \ x \in \Lambda. \label{eq:klm vx condition}
\end{align}
We first consider the condition~\eqref{eq:klm hubbard condition}.
According to Theorem~\ref{thm:SUn flat band ferro}, the ground state $\ket{\Psi_{\mathrm{GS}}}$ can be written as 
\begin{align}
    \ket{\Psi_{\mathrm{GS}}} = \sum_{\bm{\alpha}} \sum_{\bm{\beta}} C(\bm{\alpha}, \bm{\beta}) \left(\prod_{z \in I} \hat{a}_{z, \alpha_z}^\dag\right) \left(\prod_{x \in \Lambda} \hat{f}_{x, \beta_x}^\dag\right) \ket{\Psi_{\mathrm{vac}}},
\end{align}
where $\bm{\alpha} = \left(\alpha_z\right)_{z \in I}$ is a color configuration of itinerant fermions over the subset $I$, and $\bm{\beta} = \left(\beta_{x}\right)_{x \in \Lambda}$ is a color configuration of localized fermions over the lattice $\Lambda$.
The coefficients $C(\bm{\alpha}, \bm{\beta})$ must be symmetric under permutations of $\bm{\alpha}$.
\par
Then we consider the condition~\eqref{eq:klm vx condition}.
Using the local constraint $\hat{n}^{(f)}_x \ket{\Psi_{\mathrm{GS}}} = \ket{\Psi_{\mathrm{GS}}}$ and Eq.~\eqref{eq:identity of matrix elements of gemerator}, we have
\begin{align}
    \hat{V}_x \ket{\Psi_{\mathrm{GS}}} 
    = \left(-\frac{U}{2} \hat{n}_x^{(c)}(\hat{n}_x^{(c)} - 1) + 
    |J_{\mathrm{K}}| \sum_{\alpha < \beta} \left(\hat{f}_{x, \beta}^\dag \hat{c}_{x, \alpha}^\dag - \hat{f}_{x, \alpha}^\dag \hat{c}_{x, \beta}^\dag\right)
    \left(\hat{c}_{x, \alpha} \hat{f}_{x, \beta} - \hat{c}_{x, \beta} \hat{f}_{x, \alpha}\right)\right) \ket{\Psi_{\mathrm{GS}}}.
\end{align}
Since $\hat{n}_x^{(c)}(\hat{n}_x^{(c)} - 1) \ket{\Psi_{\mathrm{GS}}} = 0$, the ground state satisfies
\begin{align}
    |J_{\mathrm{K}}| \sum_{\alpha < \beta} \left(\hat{f}_{x, \beta}^\dag \hat{c}_{x, \alpha}^\dag - \hat{f}_{x, \alpha}^\dag \hat{c}_{x, \beta}^\dag\right)
    \left(\hat{c}_{x, \alpha} \hat{f}_{x, \beta} - \hat{c}_{x, \beta} \hat{f}_{x, \alpha}\right) \ket{\Psi_{\mathrm{GS}}} = 0,
\end{align}
which leads to 
\begin{align}
    \left(\hat{c}_{x, \alpha} \hat{f}_{x, \beta} - \hat{c}_{x, \beta} \hat{f}_{x, \alpha}\right) \ket{\Psi_{\mathrm{GS}}} = 0, \label{eq:cf - cf=0}
\end{align}
for any $x \in \Lambda$ and $\alpha \neq \beta$.
Using the anticommutation relations~\eqref{eq:anticommutation rel c and a} and~\eqref{eq:ac c and f}, from Eq.~\eqref{eq:cf - cf=0}, we obtain
\begin{align}
    &(-1)^{D_0} \sum_{z \in I} 
    \sum_{\substack{\bm{\alpha} \\ \alpha_z = \alpha}}
    \sum_{\substack{\bm{\beta} \\ \beta_x = \beta}}
    \mathrm{sgn}(z; I) \mu_z(x) 
    \left(C(\bm{\alpha}, \bm{\beta}) - C(\bm{\alpha}|_{\alpha_z = \beta}, \bm{\beta}|_{\beta_x = \alpha})\right) \nonumber \\
    &\times \left(\prod_{z' \in I \backslash\{z\}} \hat{a}_{z', \alpha_{z'}}^\dag\right) 
    \left(\prod_{x' \in \Lambda \backslash\{x\}} \hat{f}_{x', \beta_{x'}}^\dag\right) \ket{\Psi_{\mathrm{vac}}} = 0,
\end{align}
where $\bm{\alpha}|_{\alpha_z = \beta}$ is the color configuration obtained from $\bm{\alpha}$ by replacing $\alpha_z$ with $\beta$, and similarly, $\bm{\beta}|_{\beta_x = \alpha}$ is the color configuration obtained from $\bm{\beta}$ by replacing $\beta_x$ with $\alpha$.
The function $\mathrm{sgn}(z; I)$ is a sign factor arising from exchanges of the fermion operators.

Since all the states in the sum are linearly independent and $\alpha$ and $\beta$ are arbitrary, we get
\begin{align}
    \mu_z(x) \left(C(\bm{\alpha}, \bm{\beta}) - C(\bm{\alpha}|_{\alpha_z = \beta_x}, \bm{\beta}|_{\beta_x = \alpha_z}) \right)= 0,
\end{align}
for any $x \in \Lambda$, $z \in I$, $\bm{\alpha}$, and $\bm{\beta}$.
By assumption, it holds that $\Lambda_0 = \Lambda$, and hence, for any $x \in \Lambda$, there exists $z_0 \in I$ such that $\mu_{z_0}(x) \neq 0$.
For such $z_0 \in I$, we obtain
\begin{align}
    C(\bm{\alpha}, \bm{\beta}) = C(\bm{\alpha}|_{\alpha_{z_0} = \beta_x}, \bm{\beta}|_{\beta_x = \alpha_{z_0}})  \ \ \text{for any} \ \bm{\alpha} \ \text{and} \ \bm{\beta}.
\end{align}
Noting that $C(\bm{\alpha}, \bm{\beta})$ is symmetric under the permutations of $\bm{\alpha}$ and $x \in \Lambda$ is arbitrary, we see that 
\begin{align}
    C(\bm{\alpha}, \bm{\beta}) = C(\bm{\alpha}|_{\alpha_z = \beta_x}, \bm{\beta}|_{\beta_x = \alpha_z}), \label{eq:klm swap alpha and beta}
\end{align}
for all $x \in \Lambda$ and $z \in I$.
Using Eq.~\eqref{eq:klm swap alpha and beta}, we also find that $C(\bm{\alpha}, \bm{\beta}) = C(\bm{\alpha}, \bm{\beta}_{x \leftrightarrow y})$, where $\bm{\beta}_{x \leftrightarrow y}$ is obtained from $\bm{\beta}$ by swapping $\beta_x$ and $\beta_y$. 
Since any permutation of can be obtained by repeatedly swapping two colors, we have
\begin{align}
    C(\bm{\alpha}, \bm{\beta}) = C(\bm{\alpha}', \bm{\beta}'), \label{eq:Calphabeta=Calpha'beta', klm}
\end{align}
where $(\bm{\alpha}', \bm{\beta}')$ is a permutation of $(\bm{\alpha}, \bm{\beta})$ with $(\bm{\alpha}, \bm{\beta})$ being a color configuration 
over $I$ and $\Lambda$. 
Consequently, when we fix $L_{\alpha}$ for all $\alpha=1, \dots, N$, the coefficient $C(\bm{\alpha}, \bm{\beta})$ is a constant, and hence the ground state is unique in $\mathcal{H}'_{L_1, \dots, L_N}(\Lambda)$.
\par 
Finally, we can also check that the unique ground state in $\mathcal{H}'_{L_1, \dots, L_N}(\Lambda)$ is the fully polarized state $\ket{\Psi_{L_1, \dots, L_N}}$ in a similar way as in Section~\ref{sec:SU(N) hubbard model and main result}.
Here we introduce a word $w' = (w'_1, \dots, w'_{D_0}, w'_{D_0 + 1}, \dots, w'_{D_0 + |\Lambda|})$ whose length is $D_0 + |\Lambda|$ and denote the set of words for which $|w'_{\alpha}| = L_{\alpha}$ by $W'(L_1, \dots, L_N) = \{w' \, | \,  |w'_{\alpha}| = L_{\alpha} \ \text{for all} \ \alpha\}$.
The ground state in $\mathcal{H}'_{L_1, \dots, L_N}(\Lambda)$ satisfying Eq.~\eqref{eq:Calphabeta=Calpha'beta', klm} is written as 
\begin{align}
    \ket{\tilde{\Psi}_{L_1, \dots, L_N}} = \sum_{w' \in W'(L_1, \dots, L_N)} \hat{a}_{z_1, w'_1}^\dag  \cdots \hat{a}_{z_{D_0}, w'_{D_0}}^\dag \hat{f}_{1, w'_{D_0 + 1}}^\dag \cdots \hat{f}_{|\Lambda|, w'_{D_0 + |\Lambda|}}^\dag \ket{\Psi_{\mathrm{vac}}},
\end{align}
where we have labeled each site $x \in \Lambda$ by integers as $x = 1, \dots, |\Lambda|$.
With the commutation relations $[\hat{F}_{\mathrm{tot}}^{\alpha, \beta}, \hat{a}_{z, \gamma}^\dag] = \delta_{\alpha, \gamma} \hat{a}_{z, \beta}^\dag$, and $[\hat{F}_{\mathrm{tot}}^{\alpha, \beta}, \hat{f}_{x, \gamma}^\dag] = \delta_{\alpha, \gamma} \hat{f}_{x, \beta}^\dag$, we find that 
\begin{align}
    \left(\hat{F}_{\mathrm{tot}}^{2, 1}\right)^{L_2} \ket{\Psi_{\mathrm{all} \ 1}} = L_2 ! \sum_{w' \in W'(D_0 + |\Lambda| - L_2, L_2, 0, \dots, 0)} \hat{a}_{z_1, w'_1}^\dag  \cdots \hat{a}_{z_{D_0}, w'_{D_0}}^\dag \hat{f}_{1, w'_{D_0 + 1}}^\dag \cdots \hat{f}_{|\Lambda|, w'_{D_0 + |\Lambda|}}^\dag \ket{\Psi_{\mathrm{vac}}}.
\end{align}
By repeating the same calculations, we see that the states $\ket{\tilde{\Psi}_{L_1, \dots, L_N}}$ and $\ket{\Psi_{L_1, \dots, L_N}}$ are the same up to a normalization.
Therefore, the unique ground state $\ket{\tilde{\Psi}_{L_1, \dots, L_N}}$ in $\mathcal{H}'_{D_0 + |\Lambda|}(\Lambda)$ is indeed the fully polarized state.

\subsection{Remark}
Here we would like to comment on the stability of the flat-band ferromagnetism for the SU($N$) Kondo lattice model.
In previous studies, rigorous results regarding the stability of the flat-band ferromagnetism in the SU(2) Hubbard model have been obtained~\cite{tasaki1994stability,tasaki1995ferromagnetism,tasaki1996stability,tanaka2003stability,tanaka2018ferromagnetism,tasaki2020physics}.
The extension of these results to the SU($N$) case has also been made for a particular class of systems~\cite{tamura2019ferromagnetism,tamura2021ferromagnetism}.
By combining these results with the technique for the proof of Theorem~\ref{thm:SUn flat band ferro klm}, we can also discuss the stability of flat-band ferromagnetism for the ferromagnetic SU($N$) Kondo lattice model in a mathematically rigorous way.
To illustrate this, let us consider a sufficiently large interaction strength $U$ such that the SU($N$) Hubbard model exhibits SU($N$) ferromagnetism and assume that the other parameters are also in the range where SU($N$) ferromagnetism occurs.
For the Kondo coupling $J_{\mathrm{K}}$ such that $|J_{\mathrm{K}}|/U > N/2$, we can repeat the same argument as in the proof of Theorem~\ref{thm:SUn flat band ferro klm}.
In this way, one can establish rigorous results on the stability of ferromagnetism in the ferromagnetic SU($N$) Kondo lattice model with a nearly flat band.

\section{Conclusion and remark \label{sec:conclusion}}
In this paper, we have established rigorous results on flat-band ferromagnetism for the SU($N$) Hubbard model and the ferromagnetic SU($N$) Kondo lattice model.
For the former, we found the necessary and sufficient condition for the ground state to exhibit SU($N$) ferromagnetism when the number of particles is equal to the degeneracy of the lowest-energy single-particle states.
The condition says that the irreducibility of the projection matrix onto the space of the lowest-energy single-particle states is equivalent to the presence of SU($N$) ferromagnetism.
We also showed that this general theory could be applied to the ferromagnetic SU($N$) Kondo lattice model with the hopping term of itinerant fermions that has a flat band at the bottom.
Specifically, we considered the case in which the number of itinerant fermions is the same as the degeneracy of the flat band, and each site is occupied by one localized fermion.
We then proved that the model with a nonzero ferromagnetic Kondo coupling exhibits SU($N$) ferromagnetism when certain conditions with respect to the hopping matrix are satisfied.
In our setup, we have considered situations where the number of fermions is fixed to fully occupy the lowest band, resulting in an insulating system.
However, it will be intriguing to investigate ferromagnetism with different fillings, at which the system is expected to be metallic.
Nevertheless, even in the conventional SU(2) case, this remains inherently difficult and still challenging.
Therefore, addressing metallic ferromagnetism may require new mathematical methods and physical perspectives.
\par
In addition, we discussed the ferromagnetic SU($N$) Kondo lattice model in Theorem~\ref{thm:SUn flat band ferro klm}.
While the presence of ferromagnetic interaction is physically natural in the context of ultracold atomic systems, considering antiferromagnetic interaction is also of interest.
However, the approach employed in our proof is not readily applicable to the antiferromagnetic case because the proof for the Lemma~\ref{lemma: positive demidefinitenss of Vx} does not work. 
Therefore, another method will be required to obtain rigorous results for the antiferromagnetic SU($N$) Kondo lattice model.
For example, for the antiferromagnetic SU(2) Kondo lattice model, a rigorous proof has been given that the model with one conduction electron exhibits an incomplete ferromagnetic order~\cite{sigrist1991rigorous}.
It might be possible to extend the result to the general SU(N) cases; however, this particular issue lies beyond the scope of our current study and is left for future investigation.
\par
Finally, we discuss a potential extension of Theorem~\ref{thm:SUn flat band ferro klm}.
As we discussed, the SU($N$) Kondo lattice model is expected to be experimentally realizable with ultracold atomic gases.
While the Kondo lattice model neglects the on-site interaction among itinerant fermions, in principle, such interaction can be present.
Thus it is worth investigating flat-band ferromagnetism for the models with both the on-site interaction $U > 0$ and the ferromagnetic Kondo coupling $J_{\mathrm{K}} < 0$.
Since it has been rigorously proved that the SU($N$) Hubbard 
and ferromagnetic 
Kondo lattice model with a flat band exhibit SU($N$) ferromagnetism when certain conditions are met, it would be possible to establish rigorous results on flat-band ferromagnetism in the presence of both interactions.


\section*{Acknowledgments}
K.T. was supported by JSPS KAKENHI Grant No. 21J11575. 
H.K. was supported in part by JSPS Grant-in-Aid for Scientific Research on Innovative Areas No. JP23H01086, JSPS KAKENHI Grant No. JP18K03445, Grant-in-Aid for Transformative Research Areas A “Extreme Universe” No. JP21H05191, and the Inamori Foundation.

\section*{References}
\bibliography{reference}
\bibliographystyle{iopart-num}
\end{document}